\definecolor{colorhkust}{RGB}{20,43,140}
\definecolor{colortsinghua}{RGB}{116,52,129}
\definecolor{color1}{RGB}{128,0,0}
\theoremstyle{definition} 
\newtheorem{theorem}{Theorem}
\newtheorem{definition}{Definition}
\newtheorem{assumption}{Assumption}
\newcommand{\trace}{{\rm Tr}}
\begin{document}

\title{Differentially Private Federated Learning via Reconfigurable Intelligent Surface}
\author{Yuhan~Yang, \textit{Student Member}, \textit{IEEE}, Yong Zhou, \textit{Member}, \textit{IEEE}, Youlong Wu \textit{Member}, \textit{IEEE}, \\
and Yuanming~Shi, \textit{Senior Member}, \textit{IEEE}
%\thanks{This work was supported in part by the Natural Science Foundation of Shanghai under Grant 21ZR1442700, National Natural Science Foundation of China (NSFC) under grant 62001294 and 61901267, and Basic and Applied Basic Research fund of Guangdong Province under grant 2021A1515012553 and 2019A1515110123.}
\thanks{Y. Yang, Y. Zhou, Y. Wu, and Y. Shi are with ShanghaiTech University, Shanghai 201210, China (e-mail: \{yangyh1, zhouyong, wuyl1, shiym\}@shanghaitech.edu.cn).}   
}

% The paper headers
%\markboth{Journal of \LaTeX\ Class Files,~Vol.~14, No.~8, August~2015}%
%{Shell \MakeLowercase{\textit{et al.}}: Bare Demo of IEEEtran.cls for IEEE Communications Society Journals}

% make the title area
\maketitle

\begin{abstract}
Federated learning (FL), as a disruptive machine learning paradigm, enables the collaborative training of a global model over decentralized local datasets without sharing them. It spans a wide scope of applications from Internet-of-Things (IoT) to biomedical engineering and drug discovery. To support low-latency and high-privacy FL over wireless networks, in this paper, we propose a reconfigurable intelligent surface (RIS) empowered over-the-air FL system to alleviate the dilemma between learning accuracy and privacy. This is achieved by simultaneously exploiting the channel propagation reconfigurability with RIS for boosting the receive signal power, as well as waveform superposition property with over-the-air computation (AirComp) for fast model aggregation. By considering a practical scenario where high-dimensional local model updates are transmitted across multiple communication blocks, we characterize the convergence behaviors of the differentially private federated optimization algorithm. We further formulate a system optimization problem to optimize the learning accuracy while satisfying privacy and power constraints via the joint design of transmit power, artificial noise, and phase shifts at RIS, for which a two-step alternating minimization framework is developed. Simulation results validate our systematic, theoretical, and algorithmic achievements and demonstrate that RIS can achieve a better trade-off between privacy and accuracy for over-the-air FL systems.
\end{abstract}

\begin{IEEEkeywords}
Federated learning, biomedical monitoring, reconfigurable intelligent surface, over-the-air computation, differential privacy. 
\end{IEEEkeywords}

\IEEEpeerreviewmaketitle

\section{Introduction}\label{I}
With the rapid advancement of communication technologies for Internet-of-Things (IoT), massive amounts of sensory data generated by various edge devices (e.g., smartphones, wearables) can be leveraged to support various intelligent applications and services \cite{Intorduction-intelligent-networks}. However, the concern on data privacy makes the data sharing among edge devices unappealing and hinders the exploration of the potential values for the decentralized datasets. Recently, federated learning (FL) \cite{GoogleFL} has been recognized as a disruptive machine learning (ML) paradigm that is capable of preserving data privacy without sharing private raw data. FL spans a wide scope of applications from 6G \cite{Intorduction-intelligent-networks,6G-introduction,FL-challenges2}, IoT \cite{Dinh_CST21,Ahmed_IoTJ21}, and keyboard prediction \cite{wang2021field}, to the recent applications in healthcare informatics \cite{rieke2020future,IoMT1,IoMT2}, medical imaging \cite{Yan_JBHI21}, drug discovery \cite{chen2020fl}, and diabetes mellitus \cite{warnat2021swarm}. In particular, for cross-device FL, many edge devices collaboratively train a common model under the orchestration of a central edge server by periodically exchanging their local model updates. Despite the promising benefits, FL still encounters many challenges including statistical heterogeneity, communication bottleneck, as well as privacy and security concerns \cite{FL-challenges1,FL-challenges2,Privacy-introduction}.

\par
The communication overhead incurred by the periodical model update exchange is a key performance-limiting factor of FL. To enable efficient model aggregation, over-the-air computation (AirComp) has been recently proposed as a promising aggregation scheme by integrating communication and computation \cite{AirComp}. Specifically, multiple edge devices can upload the local models concurrently by exploiting the waveform superposition property of a multiple-access channel \cite{Broadband,AirComp, AirComp3, Tao_TWC21}, thereby achieving high spectrum efficiency and low transmission latency. In particular, AirComp was leveraged in \cite{Broadband} to reduce latency for FL over broadband channels compared with traditional orthogonal multiple access schemes. The learning performance of over-the-air FL was improved in \cite{AirComp} by jointly optimizing the receive beamforming design and device selection based on AirComp. The authors in \cite{AirComp3} also proposed a gradient sparsification scheme for over-the-air stochastic gradient method to reduce the dimensionality of the exchanged updates.

\par
Despite the aforementioned advantages, the heterogeneity of wireless links inevitably degrades the learning performance due to the magnitude attenuation and misalignment of signals received at the edge server. To alleviate the detrimental effect of wireless fading channels, reconfigurable intelligent surface (RIS) has recently been introduced as a key enabling technology to support fast and reliable model aggregation for over-the-air FL systems by reconfiguring the propagation environment \cite{Introduction-RIS, Channel-coefficient, Channel-invariant}. Specifically, RIS is a planar metasurface equipped with an array of cost-effective passive reflecting elements, which are orchestrated by a software-enabled controller to adjust the phase shifts of the incident signals \cite{RIS-challenges, RIS-principle, Huang_TWC19, Survey-RIS}. This helps improve the power of received signals, thereby enhancing the communication performance between all edge devices and the edge server. The waveform superposition property of a wireless multiple-access channel based on RIS is thus able to adapt to the local model updates, thereby improving the learning performance for over-the-air FL systems via jointly optimizing phase shifts, receive beamforming and edge device selection \cite{Channel-coefficient,Channel-invariant}.

\par
Besides, as illustrated in \cite{Privacy-leakage1,Privacy-leakage2,FLPrivacy-leakage}, transmitting the model updates (e.g., local gradients) can still cause privacy leakage. To quantify the privacy disclosure,
\begin{table*}[t]\footnotesize	
	\renewcommand\arraystretch{1.2}	
	\vspace*{-0.6em}
	\centering 
	\begin{tabular}{|>{\color{black}}c ||>{\color{black}}c|>{\color{black}}c||>{\color{black}}c|}
		\hline
		\textbf{Notation} & \textbf{Description}& \textbf{Notation} & \textbf{Description}\\
		\hline
		$K$ & Number of edge devices & $N$ & Number of passive reflecting elements of RIS \\
		\hline
		$D_k$ & Size of local dataset in the $k$-th edge device & $T$ & Total number of learning rounds \\
		\hline
		$e$ & Channel coherence length & $I$ & Number of communication blocks in one learning round \\
		\hline
		$\pmb\theta$ & $d$-dimensional model parameter & $(\epsilon,\delta)$ & Privacy level and failure probability \\
		\hline
		$\pmb g_{k,t}$ & Local update of edge device $k$ in the $t$-th learning round &
		$\pmb n_{k,t}(i)$ & Artificial noise of edge device $k$ in the $i$-th block \\
		\hline
		$\alpha_{k,t}(i)$ & Transmit scalar of edge device $k$ in the $i$-th block & $\eta_t(i)$ & Uniform power scaling factor in the $i$-th block \\
		\hline
		$\pmb s_{k,t}(i)$ & Pre-processed signal of edge device $k$ in the $i$-th block & $\pmb x_{k,t}(i)$ & Transmit signal of edge device $k$ in the $i$-th block \\
		\hline
		$\pmb r_t(i)$ & Aggregated signal in the $i$-th block of learning round $t$ & $\pmb w_t(i)$ & Additive white Gaussian noise in the $i$-th block \\
		\hline
		$h_{k,t}^d(i)$ & Channel coefficient from edge device $k$ to the edge server & $\pmb h_{k,t}^r(i)$ & Channel coefficient from edge device $k$ to RIS \\
		\hline
		$\pmb m_t$ & Channel coefficient from RIS to the edge server & $h_{k,t}(i)$ & Composite channel response of the $k$-th edge device \\
		\hline
		$\pmb\Theta_t(i)$ & Phase shift matrix of RIS in the $i$-th block & $P_0$ & Maximum transmit power of all edge devices \\
		\hline
		$\sigma^2_{k,t}(i)$ & Power of artificial noise $\pmb n_{k,t}(i)$ & $N_0$ & Power of wireless channel noise \\
		\hline
	\end{tabular}
	\caption{Summary of notations in this paper.}  
	\label{table1}
\end{table*}
a rigorous mathematical framework called differential privacy (DP) \cite{Differential-privacy1, Differential-privacy2} has been proposed. Various privacy-preserving mechanisms have been developed by injecting random perturbations that obey specific distributions, e.g., Gaussian \cite{Differential-privacy1}, Laplacian \cite{Laplacian}, and Binomial \cite{Binomial}. In particular, the authors in \cite{FLPrivacy-free} developed an AirComp based FL scheme to achieve privacy protection for free. This procedure is accomplished by leveraging the inherent wireless channel noise to protect user privacy. The authors in \cite{Privacy-anonymous} proved the inherent anonymity of AirComp which hides each private local update in the crowd to ensure high privacy, thereby reducing the amount of artificial noise added to the local updates. A disturbance scheme with additional perturbations added to a partial set of edge devices to benefit the whole system was proposed in \cite{Parital-noise}. However, the learning accuracy of FL may significantly degenerate due to the reduction of signal-to-noise ratio (SNR) for privacy guarantee \cite{FLPrivacy-free}, which yields a trade-off between accuracy and privacy.

\par
To balance the privacy-accuracy trade-off, the existing works \cite{FLPrivacy-free,Privacy-anonymous,Parital-noise,Privacy-power-allocation1,Privacy-power-allocation2} only focus on designing power allocation schemes, without considering the reconfigurability of the wireless environment. In comparison, we propose an RIS-enabled over-the-air FL system to balance the trade-off between privacy and accuracy, which is achieved by simultaneously exploiting the channel propagation reconfigurability with RIS and waveform superposition property with AirComp. Besides, we consider a practical local model updates transmission scheme by distributing the high-dimensional model updates (e.g., deep neural network model) across multiple communication blocks in one learning round \cite{Delay}. However, most of the previous works often make the lightweight assumption on the learning models, which requires the dimension of model updates to be small enough to be transmitted within one communication block \cite{FLPrivacy-free,Privacy-anonymous,Parital-noise,Privacy-power-allocation1,Privacy-power-allocation2,DP-proof,AirComp4,Channel-invariant}. Through convergence analysis and system optimization, we reveal that the RIS-enabled FL system can enhance system SNR and boost the received signal power to establish high learning accuracy while satisfying the privacy requirements.

\par
The major contributions of this paper are summarized as follows:
\begin{itemize}
\setlength{\itemsep}{0pt}
\setlength{\parsep}{0pt}
\setlength{\parskip}{0pt}
\setlength{\topsep}{0pt}
\item From a systematic perspective, we develop an RIS-enabled FL system with privacy guarantees for fast and reliable model aggregation to alleviate the dilemma between learning accuracy and privacy by exploiting channel propagation reconfigurability and waveform superposition property.
\item From a theoretical perspective, for high-dimensional model updates transmission across multiple communication blocks in one learning round, we propose a privacy-preserving transmission scheme based on DP and analyze its convergence behavior.
\item From an algorithmic perspective, we propose a two-step alternating minimization framework to jointly optimize the transmit power, artificial noise, and phase shifts for system optimization, leading to an optimal power allocation scheme.
\item Simulation results validate our systematic, theoretical, and algorithmic achievements and demonstrate the advantages of deploying RIS for enhancing both privacy and accuracy performance in the learning process.
\end{itemize}
The remainder of this paper is organized as follows. The privacy-preserving RIS-enabled FL system is presented in Section \ref{II}. Section \ref{III} analyzes the training loss, privacy, and power constraints, yielding a nonconvex system optimization problem. In Section \ref{IV}, an alternating minimization framework is proposed for solving this nonconvex problem. Simulation results are elaborated in Section \ref{V} to demonstrate the advantages of the RIS-enabled FL system. Finally, Section \ref{VI} concludes this work.

\par
\emph{Notations}: Italic and boldface letters denote scalar and vector (matrix), respectively. $\mathbb{R}^{m\times n}$ and $\mathbb{C}^{m\times n}$ denote the real and complex domains with the space of $m\times n$, respectively. For a positive integer $i$, we let $[i]\!\triangleq\!\{1,\ldots,i\}$. The operators $(\cdot)^T,(\cdot)^H,\trace\,(\cdot),\text{rank}\,(\cdot),\mathbb{E}\,(\cdot)$, and $\text{diag}\,(\cdot)$ represent the transpose, Hermitian transpose, trace, rank, statistical expectation, and diagonal matrix, respectively. $\land$ denotes logical and operation. The operator $|\cdot|$ is the cardinality of a set or the absolute value of a scalar number, and $||\cdot||$ denotes the Euclidean norm.

\section{System Model}\label{II}
In this section, we first introduce an RIS-enabled FL system based on AirComp for fast and accurate model aggregation, followed by presenting a DP-based privacy-preserving pipeline for privacy concerns. The notations are listed in Table \ref{table1}.

\subsection{Distributed Federated Learning}
As shown in Fig. \ref{fig1}, we consider an RIS-enabled wireless FL system consisting of one single-antenna edge server, $K$ single-antenna edge devices indexed by set $\mathcal{K}\!=\!\{1,\ldots,K\}$, and an RIS equipped with $N$ passive reflecting elements. We assume that each edge device $k\!\in\!\mathcal{K}$ has its own local dataset $\mathcal{D}_k$ with $D_k\!=\!|\mathcal{D}_k|$ data samples. For a given $d$-dimensional model parameter $\pmb\theta\!\in\!\mathbb{R}^d$, the local loss function of edge device $k$ is defined as
\begin{equation}
F_k(\pmb{\theta}) = \frac{1}{D_k}\sum_{(\pmb x,y)\in\mathcal{D}_k}f(\pmb x,y;\pmb{\theta}), \label{F_local}
\end{equation}
where $f(\pmb x,y;\pmb{\theta})$ represents the sample-wise loss function quantifying the prediction error of model $\pmb\theta$ on training sample $\pmb x$ with respect to its true label $y$ in $\mathcal{D}_k$. We assume that all local datasets have the same size \cite{Broadband}, i.e., $D_k=D,\;\forall k\in\mathcal{K}$. Then, the global loss function can be defined as
\begin{equation}
F(\pmb{\theta}) = \frac{1}{\sum_{k=1}^{K}D_k}\sum_{k = 1}^{K}D_kF_k(\pmb{\theta})=\frac{1}{K}\sum_{k=1}^{K}F_k(\pmb\theta) \label{F_global},
\end{equation}
which refers to the empirical average of the sample-wise loss functions on the global dataset $\mathcal{D}=\bigcup_{k=1}^{K}\mathcal{D}_k$. We also assume that each edge device observes i.i.d. samples from a common distribution \cite{FLPrivacy-free}. The non-i.i.d. setting in FL is also of special interest \cite{FLnoniid1}, but it is not the focus of this work.

\par
The learning process minimizes (\ref{F_global}) by updating $\pmb{\theta}$ based on the local gradients sent by edge devices, i.e.,
\begin{equation}
\pmb\theta^*=\underset{\pmb\theta\in\mathbb{R}^d}{\text{argmin}}\,F(\pmb\theta).
\end{equation}  
This problem can be tackled by using the popular decentralized optimization method, e.g., \pmb{FedSGD} \cite{GoogleFL}. For analytical ease, in this paper, full-batch gradient descent is adopted for local model update and $\pmb\theta$ is updated periodically within $T$ training rounds \cite{AirComp,FLPrivacy-free,Privacy-power-allocation2}. The $t$-th learning round consists of the following procedures:
\begin{itemize}
    \item \emph{Broadcast}: The edge server broadcasts the current global model parameter $\pmb\theta_{t}$ to all edge devices.
    \item \emph{Local update}: Each edge device $k$ computes its local update $\pmb g_{k,t}$ with respect to its local dataset $\mathcal{D}_k$ based on $\pmb\theta_t$, i.e.,
    \begin{equation}
    \pmb g_{k,t}=\nabla F_k\left(\pmb\theta_t\right)\in\mathbb{R}^d. \label{Local-Gradient}
    \end{equation} 
    \item \emph{Model aggregation}: All edge devices in $\mathcal{K}$ upload $\{\pmb g_{k,t}\}$ to the edge server. Then the edge server updates $\pmb\theta_{t+1}$ based on the aggregated local updates via gradient descent, i.e.,
    \begin{equation}
    \pmb\theta_{t+1}=\pmb\theta_{t}-\lambda\hat{\pmb g}_{t}, \label{Gradient-descent}
    \end{equation}
    where $\lambda$ denotes the learning rate and $\hat{\pmb g}_{t}$ is an estimation of the global gradient $\pmb g_t=\nabla F\left(\pmb\theta_{t}\right)=\frac{1}{K}\sum_{k=1}^{K}\pmb g_{k,t}$.
\end{itemize}

\begin{figure}[t]
        \centering
        \includegraphics[scale=0.40]{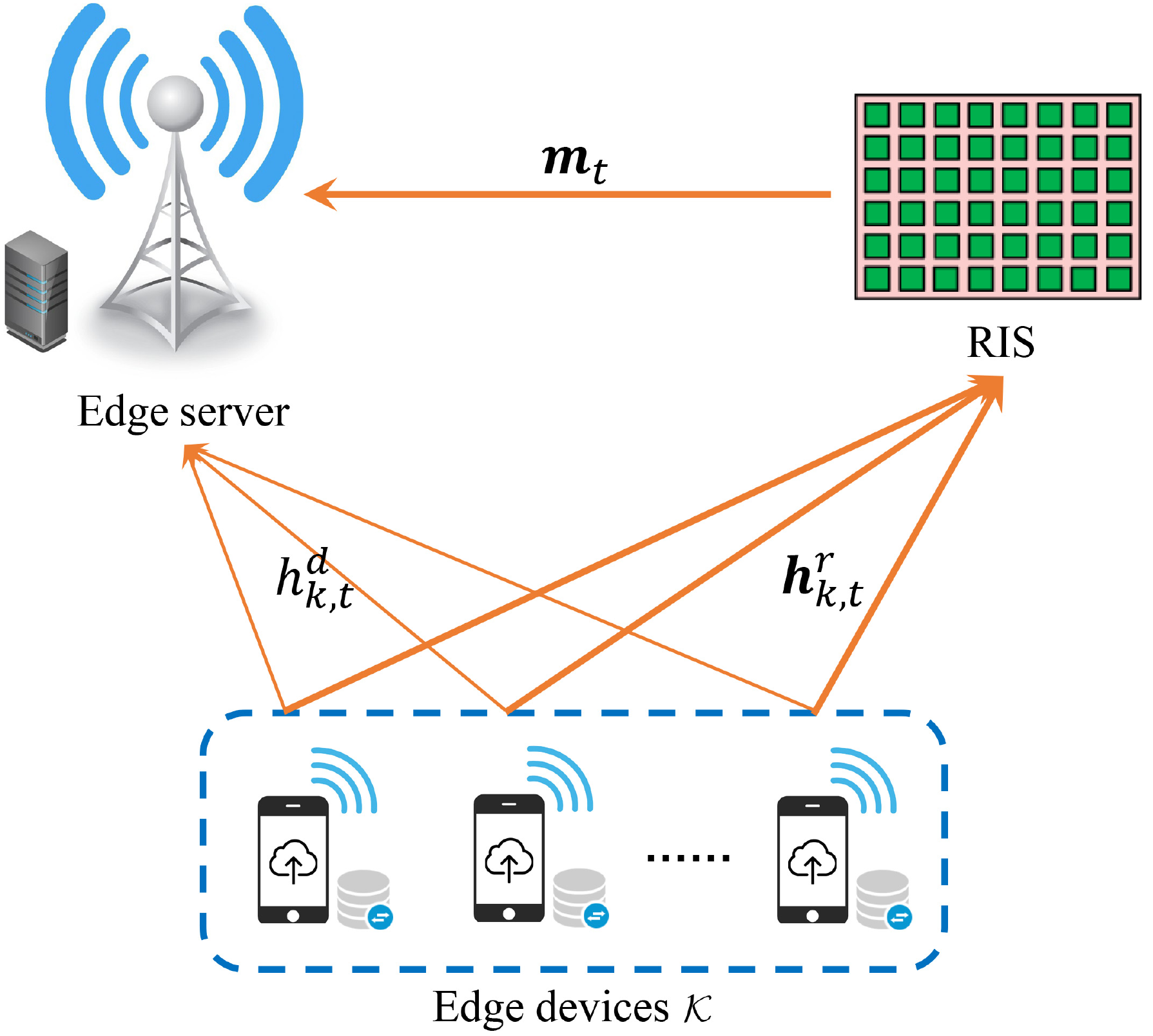}
        \caption{An RIS-enabled wireless FL system.}
        \label{fig1}
\end{figure}

\subsection{Channel Model for RIS-Enabled FL Systems}
The existence of unfavorable channel propagations and the power limitation of each edge device may severely degrade the accuracy for model aggregation, thereby reducing the learning accuracy of the FL system. To address this issue, an RIS equipped with $N$ reflecting elements is deployed to enhance the channel conditions of the channel links between all edge devices and the edge server \cite{Channel-invariant}.

\par
We focus on the information exchange process among the edge server, RIS, and edge devices, as shown in Fig. \ref{fig1}. To simplify the theoretical analysis, the downlink channels are assumed to be noise-free \cite{Phase-correction,AirComp4}, while we mainly consider the uplink fading channels for aggregating the local models \cite{Channel-coefficient}. As the dimension $d$ of the model parameters is often much larger than the channel coherence length $e$ \cite{Delay}, we propose a practical transmission scheme illustrated in Fig. \ref{fig2}, where each edge device uploads its local update sequentially over several consecutive communication blocks. Specifically, in the $t$-th learning round, the whole update message $\pmb x_{k,t}\!=\![\pmb x_{k,t}^T(1),\ldots,\pmb x_{k,t}^T(I)]^T\!\in\!\mathbb{C}^d$ (which will be elaborated in Section \ref{II}-D) that is a function of $\pmb g_{k,t}$ is evenly divided into $I\!=\!\lceil d/e\rceil$ $e$-dimensional vectors denoted by $\{\pmb x_{k,t}(i)\!\in\!\mathbb{C}^e\}$. Then $\{\pmb x_{k,t}(i)\}$ are sequentially transmitted across $I$ communication blocks. We further assume $I=d/e$ for simplicity.
\vskip -10pt
\begin{figure}[htb]
        \centering
        \includegraphics[scale=0.40]{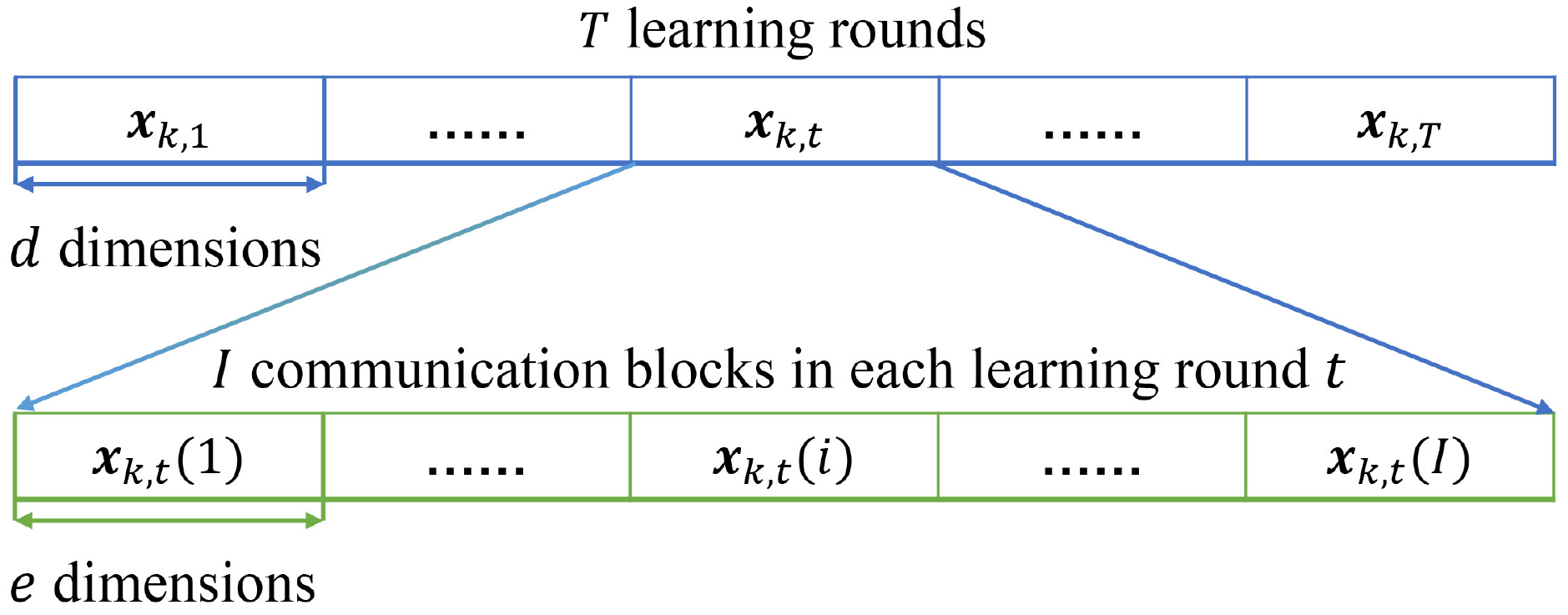}
        \caption{The transmission process of edge device $k$, where one learning round contains $I$ communication blocks.}
        \label{fig2}
\end{figure}
\vskip 5pt

\par
For the $i$-th communication block in learning round $t$, let $h_{k,t}^d(i)\in\mathbb{C}$, $\pmb{h}_{k,t}^r(i)\in\mathbb{C}^N$, and $\pmb{m}_t(i)\in\mathbb{C}^{N}$ denote the channel responses from edge device $k$ to the edge server, from edge device $k$ to RIS, and from RIS to the edge server, respectively. We assume that the channel coefficients are independent across different communication blocks and remain invariant during one communication block. We also assume that perfect channel state information (CSI) is available \cite{Channel-coefficient,Channel-invariant,Phase-correction}. Moreover, the phase-shift matrix of the RIS is denoted as $\pmb{\Theta}_t(i)=\text{diag}(\beta e^{j\phi_{1,t}(i)},\ldots,\beta e^{j\phi_{N,t}(i)})\in\mathbb{C}^{N\times N}$, where $\phi_{n,t}(i)\in[0,2\pi),\,n\in\{1,\ldots,N\}$. Without loss of generality, the amplitude reflection coefficient $\beta$ at each reflecting element is set to be one \cite{RIS-amplitude-1}. We assume that $\pmb\Theta_t(i)$ is designed at the edge server and then transmitted to the RIS via error-free downlink channels. We also assume that signals reflected twice or more times by the RIS can be ignored \cite{RIS-principle}. The composite channel response of the $k$-th edge device is thus denoted as $h_{k,t}(i)=\pmb m_t^T(i)\pmb\Theta_t(i)\pmb h_{k,t}^r(i)+h_{k,t}^d(i)$.

\par
In the $i$-th communication block of learning round $t$, the aggregated signal at the edge server is given by
\begin{equation}
\pmb r_{t}(i)=\sum_{k=1}^{K}h_{k,t}(i)\,\pmb x_{k,t}(i)+\pmb w_t(i), \label{Received-signal}
\end{equation}
where $\pmb w_{t}(i)\!\sim\!\mathcal{CN}(\pmb 0,N_0\pmb I_e)$ is the additive white Gaussian noise (AWGN) and the aggregated signal $\pmb r_t\!\in\!\mathbb{C}^d$ can be written as $\pmb r_t\!=\![\pmb r_t^T(1),\ldots,\pmb r_t^T(I)]^T$ for the $t$-th learning round.

\subsection{Differential Privacy}
In the RIS-enabled FL system, the sensitive raw data never leaves the edge devices to protect users' privacy. Nevertheless, as revealed in \cite{FLPrivacy-leakage}, the transmission of model updates $\pmb g_{k,t}$ may leak information about the local dataset statistically, which requires additional mechanisms to provide stronger privacy guarantees. We assume that the edge server is honest-but-curious \cite{DP-proof}, i.e., the edge server aims to infer the local information of each edge device based on the received signals $\pmb r=\{\pmb r_t\}_{t=1}^{T}$. We introduce DP based on the concept of neighboring datasets. Let $\mathcal{D}_k\!=\!\{\pmb x_1,\ldots,\pmb x_n\}\!\in\!\mathcal{X}^n$ denote a dataset comprising $n$ data points from $\mathcal{X}$. Two datasets $\mathcal{D}_k=\{\pmb x_1,\ldots,\pmb x_n\}$ and $\mathcal{D}_k'=\{\pmb x_1',\ldots,\pmb x_n'\}$ with the same cardinality are neighboring if they differ only by one element, i.e., there exists an index $i\in[n]$ such that $\pmb x_i\neq\pmb x_i'$ and $\pmb x_j=\pmb x_j'$ for all $i\neq j$. We then present the following definition \cite{Differential-privacy1, Differential-privacy2} for DP.

\begin{definition}
Given $\epsilon>0$, $0\leq\delta\leq 1$, and an arbitrary protocol $\mathcal{M}: \mathcal{X}^n\rightarrow\mathcal{Y}$. For any two possible neighboring datasets $\mathcal{D}_k, \mathcal{D}_k'\in\mathcal{X}^n$ and any subset $\Lambda\subseteq\mathcal{Y}$, protocol $\mathcal{M}$ is $(\epsilon,\delta)$-differentially private (in short, $(\epsilon,\delta)$-DP), if the following inequality holds:
\begin{equation}
\Pr[\mathcal{M}(\mathcal{D}_k)\in\Lambda]\leq \text{e}^{\epsilon}\Pr[\mathcal{M}(\mathcal{D}_k')\in\Lambda]+\delta. \label{DP}
\end{equation} 
Note that the case of $\delta=0$ is called pure $\epsilon$-DP.
\end{definition}

\par
Specifically, for edge device $k$, we set the received signal $\pmb r$ as test variable to be the output of protocol $\mathcal{M}$. Based on this, $(\epsilon,\delta)$-DP guarantees that for any neighboring datasets $\mathcal{D}_k$ and $\mathcal{D}_k'$, the differential privacy loss 
\begin{equation}
\mathcal{L}_{\mathcal{D}_k,\mathcal{D}_k'}(\pmb r)=\ln\frac{\Pr(\pmb r|\mathcal{D}_k)}{\Pr(\pmb r|\mathcal{D}_k')}, \label{Privacy-loss}
\end{equation}
i.e., the log-likelihood ratio of the neighboring datasets $\mathcal{D}_k$ and $\mathcal{D}_k'$ satisfies
\begin{equation}
\Pr\left(|\mathcal{L}_{\mathcal{D}_k,\mathcal{D}_k'}(\pmb r)|\leq\epsilon\right)\geq 1-\delta. \label{Privacy-standard}
\end{equation}
From (\ref{Privacy-standard}), we can see that a lower $\epsilon$ yields a higher privacy guarantee for the FL system.

\par
To achieve the privacy level $\epsilon$, a common method is to add random perturbations to the signals. We thus uniformly assign the artificial noise $\pmb n_{k,t}\!=\![\pmb n_{k,t}^T(1),\ldots,\pmb n_{k,t}^T(I)]^T\!\in\!\mathbb{C}^d$ with $\pmb n_{k,t}(i)\sim\mathcal{CN}(\pmb 0,\sigma_{k,t}^2(i)\,\pmb I_e)$ to each communication block in the $t$-th learning round, yielding a noisy version of the local update $\pmb g_{k,t}$. Therefore, the pre-processed transmit signal of edge device $k$ is given as
\begin{equation}
\pmb s_{k,t}(i)=D_k\pmb g_{k,t}(i)+\pmb n_{k,t}(i). \label{Skt}
\end{equation}
Notice that, for simplicity, we omit some procedures such as gradient clipping \cite{Gradient-clipping} which may preclude a large value of $D_k\pmb g_{k,t}(i)$.

\subsection{Model Aggregation via Over-the-Air Computation}
We leverage AirComp for fast model aggregation by exploiting signal superposition of a wireless multiple-access channel \cite{Introduction-RIS}. In AirComp, all edge devices transmit their local updates $\{\pmb g_{k,t}(i)\}_{k=1}^{K}$ synchronously using the same time-frequency resources in the communication block, thereby achieving high communication efficiency \cite{AirComp}.

\par
Motivated by \cite{Transmit-scalar}, based on the perfect CSI of each edge device, we design the block-based uniform-forcing transmit signal $\pmb x_{k,t}(i)$ in the $i$-th communication block of learning round $t$ as follows:
\begin{equation}
\pmb x_{k,t}(i)=\alpha_{k,t}(i)\,\pmb s_{k,t}(i)=\sqrt{\eta_{t}(i)}\,\frac{\,(h_{k,t}(i))^H}{|h_{k,t}(i)|^2}\,\pmb s_{k,t}(i), \label{Transmit-signal}
\end{equation}
where $\alpha_{k,t}(i)\in\mathbb{C}$ denotes the transmit scalar which is given by $\alpha_{k,t}(i)=\sqrt{\eta_t(i)}/h_{k,t}(i)$ and $\sqrt{\eta_t(i)}>0$ is the uniform power scaling factor. Given a maximum transmit power $P_0>0$ for all edge devices, we have the following power constraints in each communication block 
\begin{equation}
\mathbb{E}\left(\left|\left|\pmb x_{k,t}(i)\right|\right|^2\right)=\mathbb{E}\left(\left|\left|\alpha_{k,t}(i)\pmb s_{k,t}(i)\right|\right|^2\right)\leq P_0,\,\forall k,i,t \label{Power-constraint}
\end{equation}
by taking the expectation over the additive Gaussian noise in (\ref{Skt}). Note that, for each communication block, $\eta_t(i)$ and $\pmb\Theta_t(i)$ are elaborately designed by the edge server and broadcasted to the edge devices and RIS, which result in additional communication overheads.

\par
Therefore, based on (\ref{Skt}) and (\ref{Transmit-signal}), the aggregated signal $\pmb r_{t}(i)$ defined in (\ref{Received-signal}) can be rewritten as
\begin{equation}
\pmb r_{t}(i)=\sqrt{\eta_t(i)}\sum_{k=1}^{K}D_k\pmb g_{k,t}(i)+\sqrt{\eta_t(i)}\sum_{k=1}^{K}\pmb n_{k,t}(i)+\pmb w_t(i). \label{Rt-vector}
\end{equation}
Then, the edge server computes the estimated global update $\hat{\pmb g}_t(i)$ by
\begin{equation}
\begin{split}
\hat{\pmb g}_{t}(i)&=\frac{1}{KD\sqrt{\eta_{t}(i)}}\text{Re}\left\{\pmb r_{t}(i)\right\} \\
&=\pmb g_{t}(i)+\frac{1}{KD}\sum_{k=1}^{K}\text{Re}\{ \pmb n_{k,t}(i)\}+\frac{\text{Re}\left\{\pmb w_t(i)\right\}}{KD\sqrt{\eta_t(i)}}, \label{Estimated-g-entry}
\end{split}
\end{equation}
where $\pmb g_{t}(i)\!=\!\sum_{k=1}^{K}\pmb g_{k,t}(i)/K$. After $I$ rounds of model aggregation, the resulting $\hat{\pmb g}_{t}\!=\![\hat{\pmb g}_{t}^T(1),\ldots,\hat{\pmb g}_{t}^T(I)]^T$ is an unbiased estimation of the true global gradient $\pmb g_{t}\!=\![\pmb g_{t}^T(1),\ldots,\pmb g_{t}^T(I)]^T$. But the existence of the artificial noise and wireless channel noise leads to inevitable inaccuracy in $\hat{\pmb g}_t$, which yields a trade-off between accuracy and privacy. In particular, the learning accuracy also depends on SNR, which is defined as the ratio of the maximum transmit power and the noise power in one communication block, i.e.,
\begin{equation}
\text{SNR}=\frac{P_0}{e\times N_0}, \label{SNR}
\end{equation}
where $e \times N_0$ represents the channel noises power within one communication block.

\par
\emph{Remark 1} (Symbol-level synchronization): Note that the AirComp-based aggregation rule relies on symbol-level synchronization among edge devices. To achieve this, a feasible technique is timing advance mechanism \cite{LTE-timing-advance1}, which is widely used in 4G Long Term Evolution (LTE) and 5G New Radio (NR) systems. Specifically, consider the case when deploying AirComp in traditional Orthogonal Frequency Division Multiplexing (OFDM) systems, we note that $1$ MHz synchronization bandwidth can limit the timing offset within $0.1$ microseconds \cite{LTE-timing-advance2}, which is shorter than the typical length of cyclic prefix (5 microseconds) in LTE systems. Thus, the symbol-level synchronization for AirComp is feasible.

\par
\emph{Remark 2} (Limitation of uniform-forcing scalar): According to (\ref{Transmit-signal}), channels inversion power control is adopted to achieve amplitude alignment at the edge server. However, due to the power constraint in (\ref{Power-constraint}), the uniform-forcing approach may be inefficient when some edge devices encounter deep fadings, which results in a small value of $\eta_t(i)$. To overcome this issue, we can adopt the truncated-channel-inversion scheme \cite{Broadband}, which allocates power to an edge device only if its channel gain exceeds a hard threshold. Besides, the deployment of RIS can also alleviate deep fadings by reconfiguring the wireless environment, which is elaborated in Section IV-A.

\subsection{Assumptions}
We list several widely used assumptions \cite{Channel-invariant,SGD-hypothesis1} for theoretical analysis of differentially private FL systems.

\begin{assumption}
        (Strong convexity) The global loss function $F(\pmb\theta)$ is strongly convex with parameter $\mu$, i.e.,
        \begin{equation}
        F(\pmb\theta)\geq F(\pmb\theta')+\nabla F(\pmb\theta')^T(\pmb\theta-\pmb\theta')+\frac{\mu}{2}||\pmb\theta-\pmb\theta'||^2,\;\forall\pmb\theta,\,\pmb\theta'. \label{PL}
        \end{equation}
\end{assumption}

\begin{assumption}
        (Smoothness) The global loss function $F(\pmb\theta)$ is Lipschitz continuous with parameter $L$, i.e.,
        \begin{equation}
        F(\pmb\theta)\leq F(\pmb\theta')+\nabla F(\pmb\theta')^T(\pmb\theta-\pmb\theta')+\frac{L}{2}||\pmb\theta-\pmb\theta'||^2,\;\forall\pmb\theta,\,\pmb\theta', \nonumber
        \end{equation}
        which implies the following inequality:
        \begin{equation}
        ||\nabla F(\pmb\theta)-\nabla F(\pmb\theta')||\leq L||\pmb\theta-\pmb\theta'||,\;\forall \pmb\theta,\,\pmb\theta'\in\mathbb{R}^d. \nonumber
        \end{equation}
\end{assumption} 

\begin{assumption}
        (Block gradient bound) In any learning round $t$, for any training data sample $(\pmb x,y)$, the sample-wise loss function is upper bounded by a given constant $\gamma_{t}$, i.e.,
        \begin{equation}
        \left|\left|\nabla f\left(\pmb x,y;\pmb\theta_{t}\right)\right|\right|\leq\gamma_{t},\;\forall t.
        \end{equation}
        Based on triangular inequality, for edge device $k$, there always exists a constant $\zeta_{k,t}(i)\leq\gamma_{t}$ satisfying $\left|\left|\pmb g_{k,t}(i)\right|\right|\leq\zeta_{k,t}(i)$.
\end{assumption}

\section{Performance Analysis and System Optimization}\label{III}
In this section, we shall analyze the convergence behavior of the RIS-enabled FL system and show the key component of the privacy-preserving mechanism. We then provide the system optimization formulation to model the dilemma between privacy and accuracy.

\subsection{Convergence Analysis}
In the $t$-th learning round, the edge server updates the global model based on the received signal $\pmb r_{t}$. However, the fading channels $\{h_{k,t}(i)\}$ as well as the artificial noise $\{\pmb n_{k,t}(i)\}$ and AWGN $\{\pmb w_t(i)\}$ may severely degrade the learning performance. To characterize the convergence behavior of the RIS-enabled FL system, we leverage the expected value of the gap between the optimal value $F^*$ and the iteration value $F(\pmb\theta_t)$ to measure the distortion over $T$ learning rounds, i.e., $\mathbb{E}[F(\pmb\theta_{T+1})-F^*]$. Hence, given $\{\eta_t(i),\sigma_{k,t}(i)\}$, we bound the average optimality gap in the following theorem.
\begin{theorem}
Under Assumptions 1 and 2, given learning rate $\lambda=1/L$, after $T$ learning iterations, the averaged optimality gap is upper bounded as
\begin{equation}
\begin{split}
\mathbb{E}&\left[F\left(\pmb\theta_{T+1}\right)\right]-F^*\leq\left(1-\frac{\mu}{L}\right)^T\left[F\left(\pmb\theta_{1}\right)-F^*\right]+\frac{1}{4L}\times \\
&\;\;\frac{e}{(KD)^2}\sum_{t=1}^{T}\left(1-\frac{\mu}{L}\right)^{T-t}\left(\sum_{k=1}^{K}\sum_{i=1}^{I}\sigma_{k,t}^2(i)+\sum_{i=1}^{I}\frac{N_0}{\eta_t(i)}\right). \label{Optimal-gap}
\end{split}
\end{equation}
\end{theorem}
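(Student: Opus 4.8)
The plan is to establish a one-step recursion for the expected optimality gap $a_t := \mathbb{E}[F(\pmb\theta_t)] - F^*$ and then unroll it over the $T$ rounds. First I would apply Assumption 2 as a descent lemma at the update $\pmb\theta_{t+1} = \pmb\theta_t - \frac{1}{L}\hat{\pmb g}_t$, obtaining $F(\pmb\theta_{t+1}) \leq F(\pmb\theta_t) - \frac{1}{L}\pmb g_t^T\hat{\pmb g}_t + \frac{1}{2L}\|\hat{\pmb g}_t\|^2$ with $\pmb g_t = \nabla F(\pmb\theta_t)$. Writing $\hat{\pmb g}_t = \pmb g_t + \pmb\xi_t$ as in (\ref{Estimated-g-entry}), where $\pmb\xi_t$ collects the aggregated artificial noise and channel noise, I would take expectation conditioned on $\pmb\theta_t$. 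Since each noise term is zero-mean and independent of the past, the cross term vanishes, $\mathbb{E}[\pmb g_t^T\pmb\xi_t]=0$, leaving $\mathbb{E}[F(\pmb\theta_{t+1})] \leq F(\pmb\theta_t) - \frac{1}{2L}\|\pmb g_t\|^2 + \frac{1}{2L}\mathbb{E}[\|\pmb\xi_t\|^2]$.

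Next I would convert the gradient-norm term into an optimality gap via the Polyak--{\L}ojasiewicz inequality $\|\nabla F(\pmb\theta)\|^2 \geq 2\mu(F(\pmb\theta)-F^*)$, which I would derive by minimizing the right-hand side of the strong-convexity bound (\ref{PL}) over $\pmb\theta$ and evaluating at the minimizer $\pmb\theta^*$. Substituting and subtracting $F^*$ yields the contraction
\[
a_{t+1} \leq \left(1-\frac{\mu}{L}\right)a_t + \frac{1}{2L}\,\mathbb{E}\!\left[\|\pmb\xi_t\|^2\right].
\]

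The heart of the argument is evaluating $\mathbb{E}[\|\pmb\xi_t\|^2]$. Because the artificial noises $\pmb n_{k,t}(i)$ are independent across devices and blocks and independent of the AWGN $\pmb w_t(i)$, the second moment splits into a sum of per-source, per-block variances with no surviving cross terms. The delicate bookkeeping is that for a complex Gaussian $\mathcal{CN}(0,\sigma^2)$ the real part has variance $\sigma^2/2$; hence each of the $e$ coordinates of $\text{Re}\{\pmb n_{k,t}(i)\}$ contributes $\sigma_{k,t}^2(i)/2$ and each coordinate of $\text{Re}\{\pmb w_t(i)\}$ contributes $N_0/2$, the factor $e$ in (\ref{Optimal-gap}) arising precisely from summing over these $e$ coordinates. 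Aggregating over all $K$ devices, all $I$ blocks, and both noise sources then gives $\mathbb{E}[\|\pmb\xi_t\|^2] = \frac{e}{2(KD)^2}\big(\sum_{k}\sum_{i}\sigma_{k,t}^2(i) + \sum_{i}N_0/\eta_t(i)\big)$, and combining this $\tfrac12$ with the $\tfrac{1}{2L}$ prefactor produces the constant $\tfrac{1}{4L}$ in the theorem.

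Finally I would unroll the contraction from $t=1$ to $T$, chaining the conditional bounds through the law of total expectation, to reach $a_{T+1} \leq (1-\mu/L)^T a_1 + \sum_{t=1}^T (1-\mu/L)^{T-t}\,\frac{e}{4L(KD)^2}(\cdots)$ with the parenthesized noise factor as above, which is exactly (\ref{Optimal-gap}). I expect the only genuine obstacle to be the variance accounting in the third step---correctly carrying the $\tfrac12$ from the real part, the dimension factor $e$, and the independence that eliminates every cross term; the rest is the standard smooth, strongly convex descent argument.
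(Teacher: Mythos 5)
Your proposal is correct and follows essentially the same route as the paper's proof: the smoothness descent lemma at $\lambda=1/L$, vanishing cross terms under expectation, the per-coordinate variance accounting (with the factor $1/2$ from taking real parts of circularly symmetric complex Gaussians and the factor $e$ from the block dimension) yielding the $\tfrac{1}{4L}$ constant, the PL-type inequality $\tfrac{1}{2}\|\nabla F(\pmb\theta_t)\|^2\geq\mu[F(\pmb\theta_t)-F^*]$ from strong convexity, and unrolling the resulting contraction over $T$ rounds. No gaps; the variance bookkeeping you flag as the delicate step is handled exactly as in the paper's Appendix A.
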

\begin{proof}
Please refer to Appendix \ref{Proof-th1}.
\end{proof}
According to \pmb{Theorem 1}, we observe that the average optimality gap $\mathbb{E}[F(\pmb\theta_{T+1})\!-\!F^*]$ is upper bounded by two terms. The first term will converge to $0$ as $T$ goes to infinity, while the second term in terms of $\{\sigma_{k,t}(i),\eta_t(i)\}$ cannot vanish due to the added artificial noise and wireless channel noise in each communication block. Besides, (\ref{Optimal-gap}) is also related to $\pmb\Theta_t(i)$ as the selection of $\eta_t(i)$ depends on $\pmb\Theta_t(i)$, which will be presented in \pmb{Theorem 3}.

\subsection{Privacy-Preserving Mechanism}
We develop a privacy-preserving scheme based on DP to alleviate privacy concerns for edge devices. We assume that $\{\eta_{t}(i),\sigma_{k,t}(i),\pmb\Theta_t(i)\}$ are fixed constants, as they do not reveal any information about the local datasets. Therefore, we only focus on the upload process from edge devices to the edge server. In the $t$-th learning round, for edge device $k$, the disclosed signals regarding its local dataset $\mathcal{D}_k$ are $\{\pmb g_{k,t}(i)\}$ across $I$ blocks. We decouple the received signal $\pmb r_t(i)$ as
\begin{equation}
\pmb r_{t}(i)=\sqrt{\eta_t(i)}D_k\pmb g_{k,t}(i)+\sqrt{\eta_t(i)}\sum_{j\neq k}D_j\pmb g_{j,t}(i)+\pmb q_t(i), \label{Rt}
\end{equation}
where
\begin{equation}
\pmb q_t(i)=\sqrt{\eta_t(i)}\sum_{j=1}^{K}\pmb n_{j,t}(i)+\pmb w_t(i). \label{Effective-noise}
\end{equation}
In (\ref{Rt}), the first term is the disclosed signal concerning $\mathcal{D}_k$, the second term is independent of edge device $k$, and the third term $\pmb q_t(i)$ is Gaussian noise. We observe that the privacy-preserving disturbance in (\ref{Effective-noise}) includes the added artificial noise and the inherent channel noise. Besides, all update messages, i.e., the first and second terms in (\ref{Rt}), enjoy the same privacy protection provided by $\pmb q_t(i)$. For notational ease, let $\pmb q_t=[\pmb q_t^T(1),\ldots,\pmb q_t^T(I)]^T$ denote the effective noise to ensure privacy for the transmission of $\{\pmb g_{j,t}\}_{j=1}^{K}$ in the $t$-th learning round.

\par
For edge device $k$, its privacy level $(\epsilon,\delta)$ depends on the sensitivity of the disclosed noise-free function regarding its local dataset. We first recall the definition of $l_2$-sensitivity \cite{Differential-privacy1}.
\begin{definition}
Let $\mathcal{M}$ be an arbitrary mechanism on the transmitted signals. The $l_2$-sensitivity $\Delta$ is defined as the maximum difference of the outputs from two neighboring datasets $\mathcal{D}_k$ and $\mathcal{D}_k'$, i.e.,
\begin{equation}
\Delta=\mathop{\max}\limits_{\mathcal{D}_k,\mathcal{D}_k'}\left|\left|\mathcal{M}\left(\mathcal{D}_k\right)-\mathcal{M}\left(\mathcal{D}_k'\right)\right|\right|_2.
\end{equation}
\end{definition}
Recalling that the first term in (\ref{Rt}) is the only disclosed function concerning $\mathcal{D}_k$. Let $\pmb u_{k,t}$ denote the difference between the outputs from two neighboring datasets $\mathcal{D}_k$ and $\mathcal{D}_k'$, which is given by
\begin{equation}
\pmb u_{k,t}\!=\![h_{k,t}(1)\alpha_{k,t}(1)\Delta\pmb g_{k,t}^T(1),\!\ldots\!,\!h_{k,t}(I)\alpha_{k,t}(I)\Delta\pmb g_{k,t}^T(I)]^T, \label{Ukt}
\end{equation}
where 
\begin{equation}
\Delta\pmb g_{k,t}(i)=\sum_{(\pmb{x},y)\in\mathcal{D}_k}\nabla f_i\left(\pmb x,y;\pmb\theta_{t}\right)-\sum_{(\pmb{x},y)\in\mathcal{D}_k'}\nabla f_i\left(\pmb x,y;\pmb\theta_{t}\right) \nonumber
\end{equation}
with $\nabla f_i\left(\pmb x,y;\pmb\theta_{t}\right)$ denoting the $(i\!-\!1)e\!+\!1$-th to $ie$-th elements in $\nabla f\left(\pmb x,y;\pmb\theta_{t}\right)$. Then the $l_2$-sensitivity $\Delta_{k,t}$ of device $k$ is
\begin{equation}
\Delta_{k,t}=\mathop{\max}\limits_{\mathcal{D}_k,\mathcal{D}_k'}\left|\left|\pmb u_{k,t}\right|\right|_2. \label{Delta}
\end{equation}
By using the triangular inequality and Assumption 3, we have
\begin{equation}
\Delta_{k,t}\leq 2\gamma_t\underset{i\in[I]}{\max}\sqrt{\eta_t(i)}.\label{Delta-bound}
\end{equation}
Based on (\ref{Delta-bound}), the privacy constraint for all edge devices is given in the following theorem.
\begin{theorem}
For any fixed sequence of $\{\eta_{t}(i),\sigma_{k,t}(i)\}_{t=1}^{T}$, the RIS-enabled FL system satisfies $(\epsilon,\delta)$-DP if we have
\begin{equation}
\sum_{t=1}^{T}\frac{4\gamma_t^2}{\xi_t}\underset{i\in[I]}{\max}\left\{\eta_t(i)\right\}\leq\mathcal{R}_{dp}(\epsilon,\delta), \label{Privacy-constraints}
\end{equation}
where
\begin{equation}
\xi_t\leq\eta_t(i)\sum_{k=1}^{K}\sigma_{k,t}^2(i)+ N_0,\, \forall i,t,\label{Xi}
\end{equation}
denotes a lower bound of the power of the effective noise $\pmb q_t(i)$, and $\mathcal{R}_{dp}(\epsilon,\delta)=\big(\sqrt{\epsilon\!+\!\left[\mathcal{C}^{-1}\left(1/\delta\right)\right]^2}\!-\!\mathcal{C}^{-1}\left(1/\delta\right)\big)^2$ with $\mathcal{C}^{-1}(x)$ denoting the inverse function of $\mathcal{C}(x)=\sqrt{\pi}xe^{x^2}$.
\end{theorem}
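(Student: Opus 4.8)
The plan is to collapse the multi-round, multi-block transmission into a single Gaussian mechanism as seen by the honest-but-curious server, and then to control the resulting privacy loss by the Gaussian tail bound that is encoded by $\mathcal{C}$. First I would fix an edge device $k$ and adopt the decoupling in (\ref{Rt}). Since the second term there depends only on $\{\mathcal{D}_j\}_{j\neq k}$ and is independent of $\mathcal{D}_k$, I would condition on the other devices' updates and on the channels (the worst case for the adversary); subtracting this known offset from $\pmb r_t(i)$ leaves $\sqrt{\eta_t(i)}D_k\pmb g_{k,t}(i)+\pmb q_t(i)$, i.e., the output of a Gaussian mechanism whose signal is the only disclosed function of $\mathcal{D}_k$ and whose additive perturbation is the effective complex Gaussian $\pmb q_t(i)$ of (\ref{Effective-noise}). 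By (\ref{Ukt}) the mean shift between neighboring datasets is $\pmb u_{k,t}$, and I would note that the factor $D_k$ cancels because $D_k\pmb g_{k,t}(i)=\sum_{(\pmb x,y)\in\mathcal{D}_k}\nabla f_i$, so its neighboring difference is exactly $\Delta\pmb g_{k,t}(i)$.

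Next I would compute the privacy loss $\mathcal{L}_{\mathcal{D}_k,\mathcal{D}_k'}(\pmb r)$ of (\ref{Privacy-loss}). The log-likelihood ratio of two complex Gaussians that differ only in their (real) mean is itself a Gaussian random variable; since the mean shift is real, only the real part of $\pmb q_t(i)$, which has variance $\tfrac12(\eta_t(i)\sum_k\sigma_{k,t}^2(i)+N_0)$ per coordinate, enters. Using independence of the perturbations across blocks and rounds and summing the per-block contributions, the total privacy loss over the $T$ rounds is Gaussian, $\mathcal{L}\sim\mathcal{N}(v,2v)$, with $v=\sum_{t}\sum_i \|\pmb u_{k,t}(i)\|^2/(\eta_t(i)\sum_k\sigma_{k,t}^2(i)+N_0)$. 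The monotonicity of privacy in the noise power is essential here: replacing each per-block noise power by its lower bound $\xi_t$ from (\ref{Xi}) can only enlarge $v$, hence gives a valid worst-case upper bound. Combining with the sensitivity estimate (\ref{Delta-bound}) and $\sum_i\|\pmb u_{k,t}(i)\|^2=\|\pmb u_{k,t}\|^2\le\Delta_{k,t}^2$ then yields $v\le \sum_{t=1}^T \frac{4\gamma_t^2}{\xi_t}\max_{i\in[I]}\eta_t(i)$, exactly the left-hand side of (\ref{Privacy-constraints}).

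It then remains to translate the requirement $\Pr(|\mathcal{L}|\le\epsilon)\ge 1-\delta$ of (\ref{Privacy-standard}) into a bound on $v$. Writing the upper tail of $\mathcal{N}(v,2v)$ with the complementary error function gives $\Pr(\mathcal{L}>\epsilon)=\tfrac12\,\mathrm{erfc}\!\big(\tfrac{\epsilon-v}{2\sqrt v}\big)$, and the lower tail $\Pr(\mathcal{L}<-\epsilon)=\tfrac12\,\mathrm{erfc}\!\big(\tfrac{\epsilon+v}{2\sqrt v}\big)$ is strictly smaller. I would invoke the classical bound $\mathrm{erfc}(y)\le \frac{1}{\sqrt\pi\,y}e^{-y^2}=1/\mathcal{C}(y)$, which is precisely why $\mathcal{C}(x)=\sqrt\pi x e^{x^2}$ appears. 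Completing the square shows that the hypothesis $v\le\mathcal{R}_{dp}(\epsilon,\delta)=(\sqrt{\epsilon+c^2}-c)^2$ with $c=\mathcal{C}^{-1}(1/\delta)$ is equivalent to $\sqrt v+c\le\sqrt{\epsilon+c^2}$, hence to $\tfrac{\epsilon-v}{2\sqrt v}\ge c$. Using the monotonicity of $\mathrm{erfc}$ together with $\mathcal{C}(c)=1/\delta$, both tails are then bounded by $\tfrac12\mathrm{erfc}(c)\le \tfrac12\cdot\frac{1}{\mathcal{C}(c)}=\tfrac{\delta}{2}$, so $\Pr(|\mathcal{L}|>\epsilon)\le\delta$, which is (\ref{Privacy-standard}) and hence $(\epsilon,\delta)$-DP for device $k$; since the bound is uniform in $k$, it holds for every device.

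The main obstacle is the reduction and bookkeeping in the second step rather than the final inversion. I must argue that conditioning on the other devices and on the CSI legitimately turns the aggregated observation into a Gaussian mechanism, keep the real-versus-complex variance factor straight so that $v$ lands on $\Delta_{k,t}^2/\xi_t$ rather than twice that (this is what makes $\mathcal{L}\sim\mathcal{N}(v,2v)$ and matches the left-hand side of (\ref{Privacy-constraints})), and justify that substituting the lower bound $\xi_t$ for the true per-block noise power is conservative. Once $\mathcal{L}\sim\mathcal{N}(v,2v)$ with the correct $v$ is in hand, recognizing that $1/\mathcal{C}$ is exactly the $\mathrm{erfc}$ tail bound reduces the rest to a one-line completion of the square.
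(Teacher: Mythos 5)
Your proposal is correct and follows essentially the same route as the paper's proof: decoupling the per-device disclosed signal in (\ref{Rt}), recognizing the effective noise as a circularly symmetric complex Gaussian so that the privacy loss is distributed as $\mathcal{N}(v,2v)$, bounding $v$ through the sensitivity estimate (\ref{Delta-bound}) together with the noise-power lower bound $\xi_t$, and inverting a Gaussian tail bound expressed through $\mathcal{C}(x)=\sqrt{\pi}xe^{x^2}$. The only cosmetic difference is that you apply the tail bound $\mathrm{erfc}(y)\leq 1/\mathcal{C}(y)$ directly to the $\mathcal{N}(v,2v)$ loss and complete the square, whereas the paper splits off the mean, introduces the auxiliary variables $\Upsilon_k,\Xi_k$ with thresholds $c_k,\hat{c}_k$, and invokes Mill's inequality---these are the same computation.
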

\begin{proof}
Please refer to Appendix \ref{Proof-th2}.
\end{proof}
From \pmb{Theorem 2}, we emphasize that both the artificial noise and the receiver noise contribute to the privacy guarantee. Besides, based on (\ref{Xi}), $\xi_t$ represents a lower bound of the weakest privacy-preserving perturbation (i.e., the lowest effective noise power) across $I$ communication blocks in the $t$-th learning round. Furthermore, the privacy guarantee is related to the joint effect of $\xi_t$ and the maximum transmit power scalar $\eta_t(i)$ among $I$ communication blocks. We also notice that the relationship between the privacy constraint and the phase-shift matrix $\pmb\Theta_t(i)$ is similar to \pmb{Theorem 1}.

\subsection{System Optimization}
To design the RIS-enabled FL system, we shall propose to minimize the optimality gap in (\ref{Optimal-gap}) while satisfying the privacy and the maximum transmit power constraints given in (\ref{Privacy-constraints}) and (\ref{Power-constraint}), respectively, across $T$ learning rounds. Specifically, given the privacy level $(\epsilon,\delta)$, we jointly optimize $\{\sigma_{k,t}(i),\pmb\Theta_t(i),\eta_t(i),\xi_t\}$ in each communication block. Based on (\ref{Skt}), (\ref{Transmit-signal}) and (\ref{Power-constraint}), we obtain the power constraints,
\begin{equation}
\mathbb{E}\left(\left|\left|\pmb x_{k,t}\right|\right|^2\right)\overset{(a)}{\leq}\frac{\eta_{t}(i)}{|h_{k,t}(i)|^2}[D_k^2\zeta_{k,t}^2(i)+e\sigma_{k,t}^2(i)]\leq P_0,\,\forall i,k,t, \nonumber
\end{equation}
where (a) comes from the fact $||\pmb g_{k,t}(i)||\leq\zeta_{k,t}(i)$ in Assumption 3, along with the triangular inequality.

\par
Note that, the optimization variables are only related to the second term in (\ref{Optimal-gap}). We obtain the following optimization problem $\mathscr{P}$:
\begin{subequations} \label{P}
        \begin{align}
        & \underset{\{\eta_{t},\xi_t,\pmb\Theta_t,\sigma_{k,t}\}}{\text{minimize}}
        & &
        \sum_{t=1}^{T}\!\left(1\!-\!\frac{\mu}{L}\right)^{-t}\!\left(\sum_{k=1}^{K}\sum_{i=1}^{I}\!\sigma_{k,t}^2(i)\!+\!\sum_{i=1}^{I}\!\frac{N_0}{\eta_t(i)}\right) \label{P-obj} \\
        & \;\;\text{subject to}
        & &
        \sum_{t=1}^{T}\frac{4\gamma_t^2}{\xi_t}\underset{i\in[I]}{\max}\left\{\eta_t(i)\right\}\leq\mathcal{R}_{dp}(\epsilon,\delta), \label{P-privacy-constraint} \\
        &&&
        \xi_t\leq\eta_t(i)\sum_{k=1}^{K}\sigma_{k,t}^2(i)+N_0,\,\forall i,t, \label{P-privacy-xi} \\
        &&&
        \frac{\eta_{t}(i)}{|\pmb m_t^T(i)\pmb\Theta_t(i)\pmb h_{k,t}^r(i)\!+\!h_{k,t}^d(i)|^2}[D^2\zeta_{k,t}^2(i) \nonumber \\
        &&&
        \qquad\qquad\qquad +e\sigma_{k,t}^2(i)]\!\leq\! P_0,\,\forall i,k,t, \label{P-power-constraints} \\
        &&&
        |\pmb\Theta_t(i)(n,n)|=1,\quad\forall i,n,t, \label{P-RIS}
        \end{align}
\end{subequations}
where (\ref{P-privacy-constraint}) and (\ref{P-privacy-xi}) denote the privacy constraints for all edge devices as presented in (\ref{Privacy-constraints}) and (\ref{Xi}), respectively, (\ref{P-power-constraints}) represents the power constraints by substituting $h_{k,t}(i)=\pmb m_t^T(i)\pmb\Theta_t(i)\pmb h_{k,t}^r(i)+h_{k,t}^d(i)$ into (\ref{Power-constraint}), and (\ref{P-RIS}) indicates the unit modulus constraints of all reflecting elements at RIS with $\pmb\Theta_t(i)(n,n)$ denoting as the $(n,n)$-th entry of $\pmb\Theta_t(i)$. Unfortunately, due to the non-convexity of the privacy and power constraints, and the nonconvex unimodular constraints of the phase shifts at RIS, problem $\mathscr{P}$ is highly intractable and computationally challenging.

\section{A Two-Step Alternating Low-Rank Optimization Framework}\label{IV}
In this section, we propose a two-step alternating minimization framework for solving problem $\mathscr{P}$. Specifically, the transmit power scalar $\eta_t(i)$ and artificial noise $\sigma_{k,t}(i)$ at each edge device, as well as the phase-shift matrix $\pmb\Theta_t(i)$ at the RIS are optimized in an alternative manner until the algorithm converges.

\begin{figure*}[t]   
	\begin{minipage}{0.49\linewidth}
		\centerline{\includegraphics[scale=0.69]{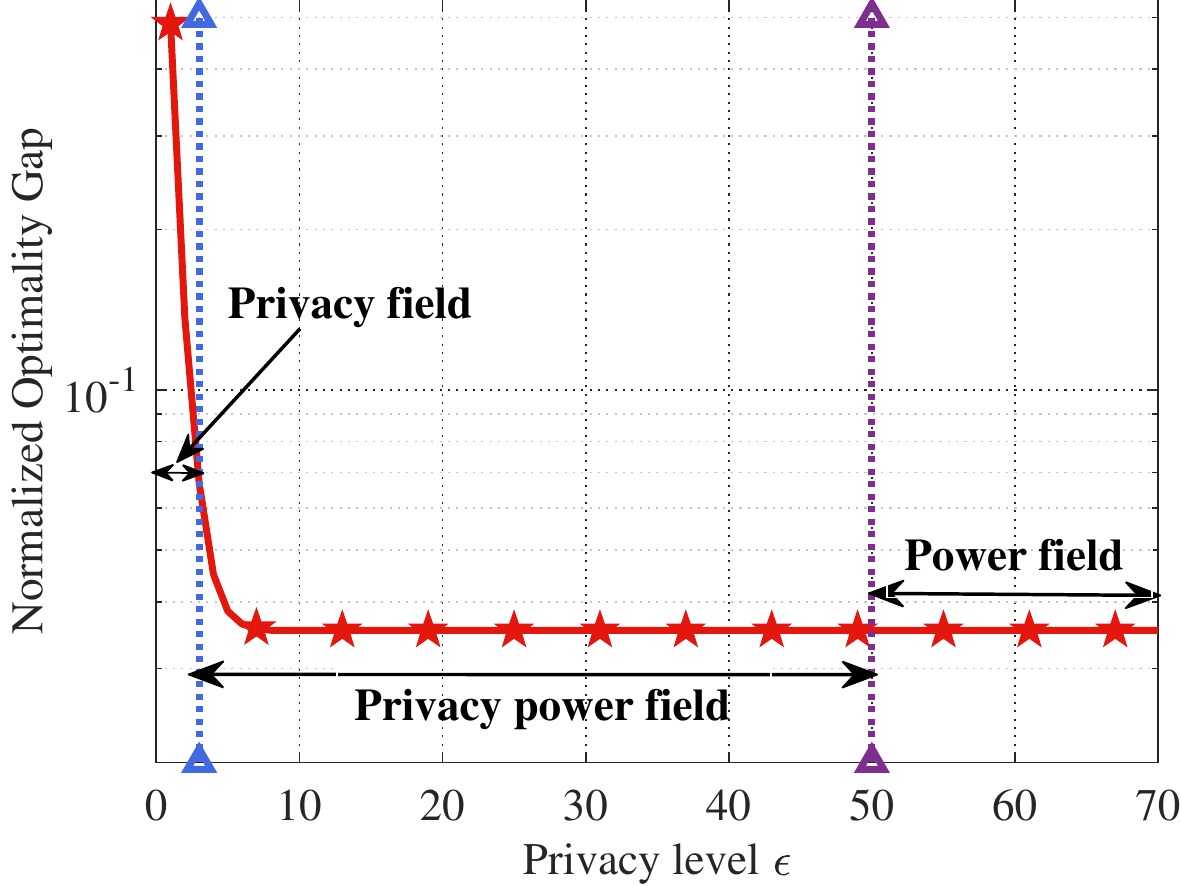}}
		\caption{The relationship between learning accuracy and privacy level. The two vertical lines distinguish different cases.}
		\label{fig3}
	\end{minipage}
	\hfill
	\begin{minipage}{0.49\linewidth}
		\centerline{\includegraphics[scale=0.39]{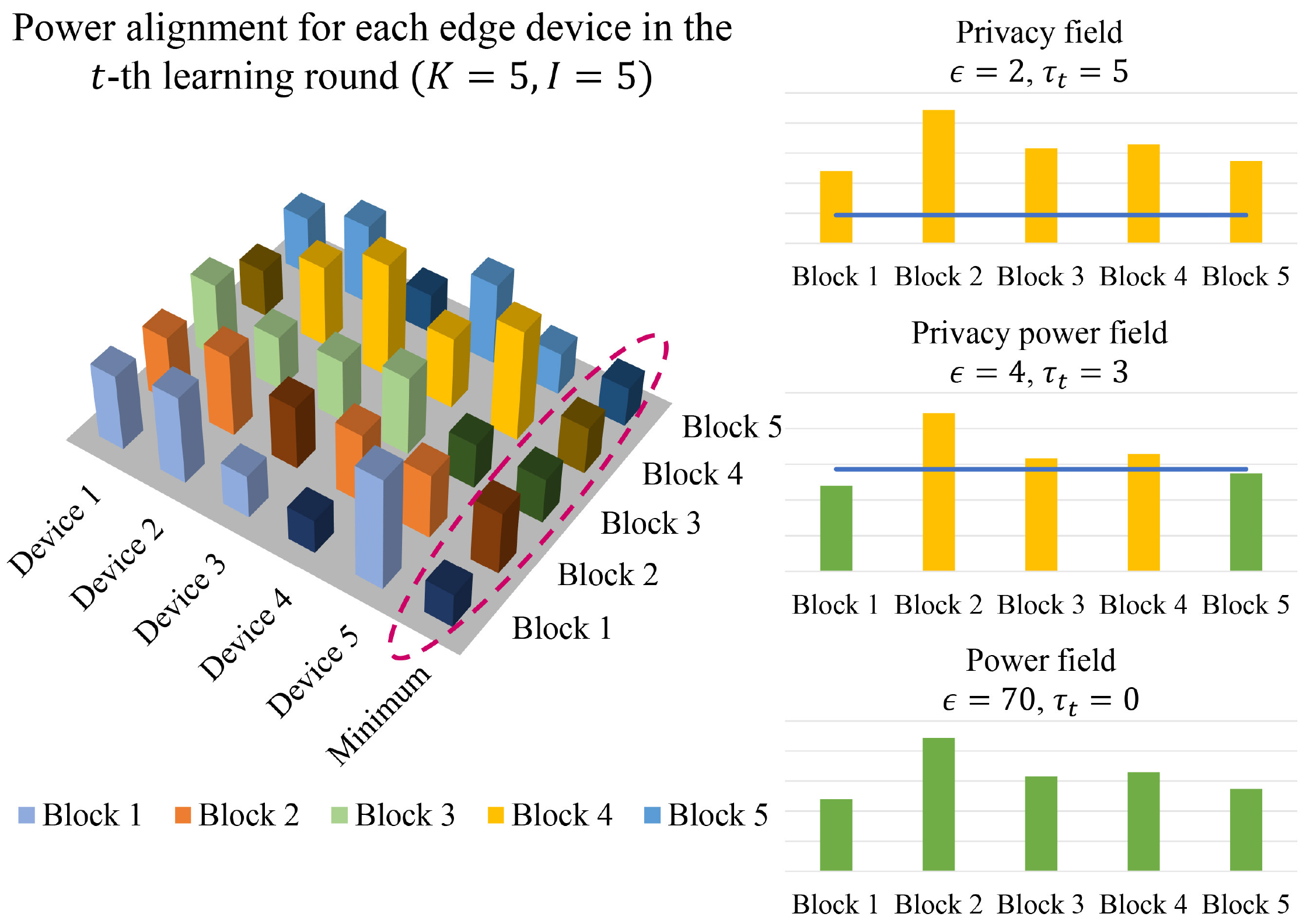}}
		\caption{Power allocation under different privacy levels in one learning round. The blue lines indicate the privacy limitations.}
		\label{fig4}
	\end{minipage}
\end{figure*}

\subsection{Co-Design of Artificial Noise and Power Scalar} \label{IV-A}
The main idea of this framework is to alternately optimize $\{\eta_t(i),\sigma_{k,t}(i),\pmb\Theta_t(i),\xi_t\}$ in $T$ learning rounds. Note that problem $\mathscr{P}$ is a non-causal setting where parameters $\{\pmb h_{k,t}^r(i),\pmb m_t(i),h_{k,t}^d(i),\zeta_{k,t}(i),\gamma_t\}$ need to be known in advance, which will be discussed in Section \ref{VI}-C.

\par
Given the phase-shift matrix $\pmb\Theta_t(i)$, problem $\mathscr{P}$ is reduced to problem $\mathscr{P}_1$ as follows:
\begin{subequations} \label{P1}
        \begin{align}
        & \underset{\{\eta_{t},\sigma_{k,t},\xi_t\}}{\text{minimize}}
        & &
        \sum_{t=1}^{T}\!\left(1\!-\!\frac{\mu}{L}\right)^{-t}\!\left(\sum_{k=1}^{K}\sum_{i=1}^{I}\!\sigma_{k,t}^2(i)\!+\!\sum_{i=1}^{I}\!\frac{N_0}{\eta_t(i)}\right) \label{P1-obj} \\
        & \text{subject to}
        & &
        \sum_{t=1}^{T}\frac{4\gamma_t^2}{\xi_t}\underset{i\in[I]}{\max}\left\{\eta_t(i)\right\}\leq\mathcal{R}_{dp}(\epsilon,\delta), \label{P1-privacy} \\
        &&&
        \xi_t\leq\eta_t(i)\sum_{k=1}^{K}\sigma_{k,t}^2(i)+N_0,\,\forall i,t, \label{P1-privacy-xi} \\
        &&&
        \frac{D^2\zeta_{k,t}^2(i)+e\sigma_{k,t}^2(i)}{|h_{k,t}(i)|^2}\eta_{t}(i)\leq P_0,\,\forall i,k,t, \label{P1-power}
        \end{align}
\end{subequations}
where $h_{k,t}(i)=\pmb m_t^T(i)\,\pmb\Theta_t(i)\,\pmb h_{k,t}^r(i)+h_{k,t}^d(i)$ denotes the composite channel response which remains constant given $\pmb\Theta_t(i)$. We note that through a change of variables, problem $\mathscr{P}_1$ can be transformed into a convex problem tackled by KKT conditions, leading to an adaptative power allocation mechanism, i.e., \pmb{Theorem 3}. For analytical ease, we denote $\tau_t$ as the number of communication blocks which are restricted by privacy in the $t$-th learning round.
\begin{theorem}
The optimal solution to problem $\mathscr{P}_1$ is given by
\begin{equation}
\xi_t=N_0,\,\forall t,\quad\sigma_{k,t}(i)=0,\,\forall i,k,t,
\end{equation}
and the selection of $\{\eta_t(i)\}$ depends on the joint effect of privacy and power, which yields the following three cases:
\begin{itemize}
\item[(a)] \emph{\pmb{Privacy field}}: If $\tau_t=I$ for all learning rounds, then the optimal $\{\eta_t(i)\}_{i=1}^I$ remains a constant $\eta_t$ with
\begin{equation}
\eta_t=\frac{N_0\sqrt{I}}{\;2\gamma_t\beta^{\frac{1}{2}}}\left(1-\frac{\mu}{L}\right)^{-\frac{t}{2}}\,\land\,\eta_t<P_0\underset{i,k}{\min}\frac{\;|h_{k,t}(i)|^2}{D^2\zeta_{k,t}^2(i)}, \label{Optimal-eta3}
\end{equation}
where $\beta$ is selected to strictly satisfy (\ref{P1-privacy}), i.e.,
\begin{equation}
\sum_{t=1}^{T}\frac{4\gamma_t^2}{N_0}\eta_t=\sum_{t=1}^{T}\frac{2\sqrt{I}\gamma_t}{\;\beta^{\frac{1}{2}}}\left(1-\frac{\mu}{L}\right)^{-\frac{t}{2}}=\mathcal{R}_{dp}(\epsilon,\delta). \label{Case-a-equ}
\end{equation} 
\item[(b)] \emph{\pmb{Privacy power field}}: It is a generalized version of (a) with $\tau_t\in[I]$, and the optimal $\eta_t(i)$ is given by
\begin{equation}
\eta_t(i)\!=\!\min\left\{\frac{N_0\sqrt{\tau_t}}{\;2\gamma_t\beta^{\frac{1}{2}}}\left(1\!-\!\frac{\mu}{L}\right)^{-\frac{t}{2}},P_0\underset{k}{\min}\frac{\;|h_{k,t}(i)|^2}{D^2\zeta_{k,t}^2(i)}\right\}, \label{Optimal-eta2}
\end{equation}
where $\beta$ and $\tau_t$ are selected to strictly satisfy (\ref{P1-privacy}), i.e.,
\begin{equation}
\begin{aligned}
\sum_{t=1}^{T}\frac{4\gamma_t^2}{N_0}\underset{i\in[I]}{\max}\left\{\eta_t(i)\right\}\!=\!\sum_{t=1}^{T}\frac{4\gamma_t^2}{N_0}\underset{i\in[I]}\min\left\{\frac{N_0\sqrt{\tau_t}}{\;2\gamma_t\beta^{\frac{1}{2}}}\Big(1-\frac{\mu}{L}\Big)^{-\frac{t}{2}}, \notag\right.
\\
\phantom{=\;\;}
\left.P_0\,\underset{i\in[I]}{\max}\left\{\underset{k\in[K]}{\min}\frac{\;|h_{k,t}(i)|^2}{D^2\zeta_{k,t}^2(i)}\right\}\right\}=\mathcal{R}_{dp}(\epsilon,\delta). \label{Case-b-equ}
\end{aligned}
\end{equation}
\item[(c)] \emph{\pmb{Power field}}: If $\tau_t=0$ holds for all learning rounds, i.e.,
\begin{equation}
\sum_{t=1}^{T}\frac{4P_0\gamma_t^2}{N_0}\underset{i\in[I]}{\max}\left\{\underset{k\in[K]}{\min}\frac{\;|h_{k,t}(i)|^2}{D^2\zeta_{k,t}^2(i)}\right\}<\mathcal{R}_{dp}(\epsilon,\delta),
\end{equation}
the unique optimal power scalar $\eta_t(i)$ is given by
\begin{equation}
\eta_t(i)=P_0\,\underset{k\in[K]}{\min}\frac{\;|h_{k,t}(i)|^2}{D^2\zeta_{k,t}^2(i)}. \label{Optimal-eta1}
\end{equation}
\end{itemize}
\end{theorem}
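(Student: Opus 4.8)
The plan is to prove \textbf{Theorem 3} in two stages. First I would show that, without loss of generality, an optimizer sets all artificial noise to zero, $\sigma_{k,t}(i)=0$, and saturates $\xi_t=N_0$; this collapses the coupled, nonconvex problem $\mathscr{P}_1$ into a pure power-allocation problem in the scalars $\{\eta_t(i)\}$. Second, I would solve that reduced problem, which is convex, by writing its KKT system and reading off the three regimes. Throughout I write $w_t=(1-\mu/L)^{-t}>0$ for the per-round weight, $s_t(i)=\sum_{k}\sigma_{k,t}^2(i)\ge 0$ for the aggregate artificial-noise power, and $B_t(i)=P_0\min_{k}|h_{k,t}(i)|^2/(D^2\zeta_{k,t}^2(i))$ for the power cap on $\eta_t(i)$ implied by (\ref{P1-power}) once $\sigma=0$.

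For the first stage I would use an exchange argument. Since $\xi_t$ enters only the privacy constraint (\ref{P1-privacy}), where a larger value is always beneficial, I may assume $\xi_t=\min_i[\eta_t(i)s_t(i)+N_0]\ge N_0$ at optimality. Given any such feasible point, I rescale $\eta_t'(i)=\eta_t(i)\,N_0/\xi_t\le\eta_t(i)$ and set $s_t'(i)=0$, $\xi_t'=N_0$. The round-$t$ privacy contribution is unchanged because $\max_i\eta_t'(i)/\xi_t'=\max_i\eta_t(i)/\xi_t$, the power constraint (\ref{P1-power}) can only slacken since both $\eta_t'(i)\le\eta_t(i)$ and $\sigma_{k,t}'(i)=0$, and the objective does not increase: the round-$t$ cost drops from $\sum_i s_t(i)+\sum_i N_0/\eta_t(i)$ to $\sum_i N_0/\eta_t'(i)=\sum_i \xi_t/\eta_t(i)$, and the gap is controlled by the key inequality $(\xi_t-N_0)/\eta_t(i)\le s_t(i)$, which is exactly (\ref{P1-privacy-xi}) rearranged. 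Summing over $i$ gives $\sum_i(\xi_t-N_0)/\eta_t(i)\le\sum_i s_t(i)$, so the new cost is no larger. Hence an optimizer with $\sigma_{k,t}(i)=0$ and $\xi_t=N_0$ exists, establishing the first two claims.

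With $\sigma=0$ and $\xi_t=N_0$, problem $\mathscr{P}_1$ reduces to minimizing $\sum_t w_t\sum_i N_0/\eta_t(i)$ subject to $\sum_t(4\gamma_t^2/N_0)\max_i\eta_t(i)\le\mathcal{R}_{dp}(\epsilon,\delta)$ and $0<\eta_t(i)\le B_t(i)$. Introducing an epigraph variable $M_t$ with $\eta_t(i)\le M_t$ linearizes the nonsmooth maximum and renders the program convex ($1/\eta$ is convex on $\eta>0$, the remaining constraints being affine), so KKT conditions are necessary and sufficient under Slater's condition. Let $\nu\ge0$ be the multiplier of the privacy budget (I will identify $\nu=\beta$) and $\lambda_{t,i},\mu_{t,i}\ge0$ those of $\eta_t(i)\le M_t$ and $\eta_t(i)\le B_t(i)$. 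Stationarity in $\eta_t(i)$ gives $w_t N_0/\eta_t(i)^2=\lambda_{t,i}+\mu_{t,i}$, and stationarity in $M_t$ gives $\sum_i\lambda_{t,i}=\nu\,4\gamma_t^2/N_0$. A block with $\eta_t(i)=M_t<B_t(i)$ (privacy-limited) has $\mu_{t,i}=0$, hence $\lambda_{t,i}=w_tN_0/M_t^2$; summing over the $\tau_t$ such blocks yields $\tau_t w_t N_0/M_t^2=\nu\,4\gamma_t^2/N_0$, i.e. $M_t=N_0\sqrt{\tau_t w_t}/(2\gamma_t\sqrt{\nu})=\frac{N_0\sqrt{\tau_t}}{2\gamma_t\sqrt{\beta}}(1-\mu/L)^{-t/2}$, while a block with $B_t(i)<M_t$ (power-limited) has $\lambda_{t,i}=0$ and sits at $\eta_t(i)=B_t(i)$. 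Writing each $\eta_t(i)$ as the minimum of these two candidates reproduces (\ref{Optimal-eta2}); the regimes then follow from whether the privacy constraint is active ($\nu=\beta>0$, forcing it tight and giving the defining equations (\ref{Case-a-equ})/(\ref{Case-b-equ})) or inactive ($\nu=0$, so every block saturates its cap and $\eta_t(i)=B_t(i)$ as in (\ref{Optimal-eta1})), with cases (a) and (c) being the specializations $\tau_t=I$ and $\tau_t=0$.

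The main obstacle I anticipate is the self-consistency of the case split rather than any single computation: the index set of privacy-limited blocks, and hence $\tau_t$, is defined through $B_t(i)\ge M_t$, yet $M_t$ itself depends on $\tau_t$ and on the global multiplier $\beta$ that couples all $T$ rounds through the single budget $\mathcal{R}_{dp}(\epsilon,\delta)$. I would resolve this by a monotonicity argument: as $\beta$ decreases every $M_t$ increases, so blocks switch from power-limited to privacy-limited monotonically, which guarantees a unique consistent $(\{\tau_t\},\beta)$ satisfying the tightness condition. Care must also be taken that the nonsmooth $\max_i$ is handled through its subdifferential, equivalently the epigraph reformulation above, so that the stationarity in $M_t$ distributes the multiplier only over the active blocks.
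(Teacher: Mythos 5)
Your proposal is correct, and it reaches Theorem 3 by a route that differs from the paper's in two structural ways. First, you eliminate the artificial noise and $\xi_t$ \emph{before} any duality argument, via an explicit rescaling/exchange step: replacing $(\eta_t(i),\sigma_{k,t}(i),\xi_t)$ by $(\eta_t(i)N_0/\xi_t,\,0,\,N_0)$ preserves the privacy ratio $\max_i\eta_t(i)/\xi_t$, relaxes the power constraints, and does not increase the objective precisely because (\ref{P1-privacy-xi}) rearranges to $(\xi_t-N_0)/\eta_t(i)\le\sum_k\sigma_{k,t}^2(i)$. The paper instead keeps the noise variables inside the optimization: it performs the reciprocal change of variables $a_{k,t}^i=\sigma_{k,t}^2(i)$, $b_t^i=1/\eta_t(i)$, $c_t^i=\xi_t/\eta_t(i)$, $d_t=\xi_t/\max_i\eta_t(i)$ so that the transformed problem $\mathscr{P}_{1.1}$ is convex in all variables jointly, and derives $\widetilde a_{k,t}^i=0$ from the multiplier system in the power field --- but in the privacy and privacy-power fields it only \emph{sets} $\widetilde a_{k,t}^i=0$ ``motivated by the previous condition,'' so your exchange argument actually establishes the optimality of zero artificial noise more uniformly and rigorously across all three regimes. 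Second, your KKT analysis runs on the reduced problem directly in $\eta$ with an epigraph variable $M_t$ for $\max_i\eta_t(i)$, whereas the paper's runs in the reciprocal variables with the constraint $d_t\le c_t^i$ playing the epigraph role; these are equivalent parametrizations (your multiplier $\nu$ coincides with the paper's $\beta$, since with $\xi_t=N_0$ both multiply the same constraint function), and both produce the same complementary-slackness split into privacy, privacy-power, and power fields indexed by $\tau_t$, with identical formulas. Both proofs share the same loose end --- the self-consistent selection of $(\beta,\{\tau_t\})$ that makes the privacy constraint tight --- which you handle by a monotonicity argument in $\beta$ and the paper by enumeration; since KKT conditions are sufficient for the convex problem either resolution is adequate, though your claim that the consistent pair is \emph{unique} should be softened to allow ties at blocks where the privacy and power constraints are simultaneously active.
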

\begin{proof}
Please refer to Appendix \ref{Proof-th3}.
\end{proof}
\vskip -1em
From \pmb{Theorem 3}, the power of artificial noise becomes zero, which indicates that the additive channel noise in model aggregation serves as an inherent privacy-preserving mechanism to guarantee DP levels for each edge device \cite{FLPrivacy-free}. Besides, as shown in Fig. \ref{fig3}, three cases characterize the learning accuracy of the FL system. Essentially, the power allocation scheme, i.e., the selection of $\eta_t(i)$ under different $\epsilon$ causes the difference in learning accuracy. To make it precise, Fig. \ref{fig4} illustrates this process in one learning round. Specifically, two power alignment schemes are developed for comparison, where one represents the maximum transmit power concerning privacy requirement $\{\epsilon,\tau_t\}$, and the other indicates the case without privacy, i.e., the Minimum column in Fig. \ref{fig4}. We note that $\tau_t$, which represents the impact of $\epsilon$ across $I$ communication blocks, is the key component of \pmb{Theorem 3}. Concretely, $\tau_t$ indicates the number of communication blocks restricted by privacy in one learning round, i.e., the number of yellow bars in Fig. \ref{fig4}. Consequently, different $\epsilon$ results in a different selection of $\tau_t$, yielding the following three cases:
\begin{itemize}
\item Case (a): Due to the extremely strict privacy level, all communication blocks in this case are restricted by privacy, i.e., $\tau_t=I$.
\item Case (b): In this case, several communication blocks are restricted by privacy while others are limited by transmit power, i.e., $\tau_t\in[I]$. Besides, cases (a) and (b) must satisfy the equality in (\ref{P-privacy-constraint}), which is achieved by adaptively selecting $\{\beta,\tau_t\}$ through enumeration.
\item Case (c): In this case, due to the quite loose privacy level, all communication blocks are limited by the transmit power constraints, i.e., $\tau_t=0$.
\end{itemize}

\par
We reveal the benefits achieved by RIS to the above three cases. Due to the reconfigurable capability of RIS, we note that the RIS-enabled FL system is able to establish better channel conditions compared to the FL systems without RIS. Besides, according to (\ref{Optimal-eta2}) and (\ref{Optimal-eta1}), the RIS-enabled FL system can enjoy higher transmit power and enhance the power of received signals at the edge server, thereby resulting in high learning accuracy. However, when the privacy level is strict, i.e., Case (a), the learning accuracy is restricted by privacy and cannot be improved by RIS. We will further explore the significant impacts brought by RIS on the learning accuracy via simulations in Section \ref{V}.

\subsection{Design of Phase-Shift Matrix}
On the other hand, for given the artificial noise and power scalar $\{\sigma_{k,t}(i), \eta_t(i)\}$, problem $\mathscr{P}$ becomes the following nonconvex feasibility detection problem $\mathscr{P}_2$:
\begin{subequations} \label{P2}
        \begin{align}
        & \quad \text{find}
        & &
        \pmb\Theta_t(i) \label{P2-ori-obj} \\
        & \text{subject to}
        & &
        \frac{\eta_{t}(i)\,D^2\zeta_{k,t}^2(i)}{|\pmb m_t^T(i)\pmb\Theta_t(i)\pmb h_{k,t}^r(i)\!+\!h_{k,t}^d(i)|^2}\!\leq\! P_0,\,\forall i,k,t, \\  
        &&&
        |\pmb\Theta_t(i)(n,n)|=1,\,\forall i,n,t.
        \end{align}
\end{subequations}
For analytical ease, by denoting $v_{n,t}(i)=e^{j\phi_{n,t}(i)}$ and $\pmb c_{k,t}^H(i)=\pmb m_t^T(i)\,\text{diag}(\pmb h_{k,t}^r(i))$, we have $\pmb m_t^T(i)\,\pmb\Theta_t(i)\,\pmb h_{k,t}^r(i)\!=\!\pmb c_{k,t}^H(i)\,\pmb v_t(i)$, where $\pmb v_t(i)\!=\![v_{1,t}(i),\ldots,v_{N,t}(i)]^T$. Therefore, problem (\ref{P2}) is further transformed into problem $\mathscr{P}_{2.1}$:
\begin{subequations} \label{P2.1}
        \begin{align}
        & \quad \text{find}
        & &
        \pmb v_t(i) \label{P2-obj} \\
        & \text{subject to}
        & &
        \frac{\eta_{t}(i)D^2\zeta_{k,t}^2(i)}{|\pmb c_{k,t}^H(i)\pmb v_t(i)+h_{k,t}^d(i)|^2}\leq P_0,\,\forall k,t, \label{P2-power} \\
        &&&
        |v_{n,t}(i)|=1,\,\forall n\in[N].
        \end{align}
\end{subequations}
Nevertheless, problem $\mathscr{P}_{2.1}$ is still nonconvex and inhomogeneous. To develop an efficient algorithm, by introducing an auxiliary variable $\iota_t(i)$, problem (\ref{P2.1}) can be equivalently reformulated as a homogeneous nonconvex quadratically constrained quadratic programming (QCQP) problem \cite{RIS-principle,Approximating}, which is given by the following problem $\mathscr{P}_{2.2}$:
\begin{subequations} \label{P2.2}
        \begin{align}
        & \quad \text{find}
        & &
        \hat{\pmb v}_t(i) \\
        & \text{subject to}
        & &
        \hat{\pmb v}_t^H(i)\pmb R_{k,t}(i)\hat{\pmb v}_t(i)+|h_{k,t}^d(i)|^2 \nonumber \\
        &&&
        \qquad\qquad\quad\geq\eta_{t}(i)D^2\zeta_{k,t}^2(i)\big/P_0,\,\forall i,k,t, \\
        &&&
        |\hat{v}_{n,t}(i)|=1,\,\forall n\in[N+1],
        \end{align}
\end{subequations}
where
$$
\pmb R_{k,t}(i)=
\begin{bmatrix}
\pmb c_{k,t}(i)\left(\pmb c_{k,t}(i)\right)^H & \pmb c_{k,t}(i)h_{k,t}^d(i) \\
\left(\pmb c_{k,t}(i)h_{k,t}^d(i)\right)^H & 0
\end{bmatrix}
,\;
\hat{\pmb v}_t(i)=
\begin{bmatrix}
\pmb v_t(i) \\ \iota_t(i)
\end{bmatrix}.
$$
Let $\hat{\pmb v}_t^*(i)=[\widetilde{\pmb v}_t^*(i)^T,\iota_t^*(i)]^T$ denote a feasible $\hat{\pmb v}_t(i)$ to problem (\ref{P2.2}). Then a feasible solution $\pmb v_t^*(i)$ to problem (\ref{P2.1}) can be immediately recovered by setting $\pmb v_t^*(i)=\widetilde{\pmb v}_t^*(i)/\iota_t^*(i)$. A feasible solution $\pmb\Theta_t^*(i)$ to problem (\ref{P2}) can thus be expressed as $\pmb\Theta_t^*(i)=\text{diag}\,(\pmb v_t^*(i))$.

\par
To solve problem (\ref{P2.2}), a natural method is to formulate it as a semi-definite programming (SDP) problem by matrix lifting \cite{AirComp,Matrix-lifting}. Since $\hat{\pmb v}_t^H(i)\,\pmb R_{k,t}(i)\,\hat{\pmb v}_t(i)=\trace(\pmb R_{k,t}(i)\,\hat{\pmb v}_t(i)\,\hat{\pmb v}_t^H(i))$, we denote $\pmb V_t(i)$ as the lifting matrix of $\hat{\pmb v}_t(i)$, where $\pmb V_t(i)=\hat{\pmb v}_t(i)\,\hat{\pmb v}_t^H(i)$ and $\text{rank}\,(\pmb V_t(i))=1$. Therefore, problem $\mathscr{P}_{2.1}$ can be further transformed into the following problem $\mathscr{P}_{2.3}$:
\begin{subequations}
        \begin{align}
        & \quad \text{find}
        & &
        \pmb V_t(i) \\
        & \text{subject to}
        & &
        \trace\left(\pmb R_{k,t}(i)\pmb V_t(i)\right)+|h_{k,t}^d(i)|^2 \nonumber \\
        &&&
        \qquad\qquad\geq\eta_{t}(i)D^2\zeta_{k,t}^2(i)\big/P_0,\,\forall i,k,t, \\
        &&&
        \pmb V_t(i)(n,n)=1,\,\forall n\in[N+1], \\ 
        &&&
        \text{rank}\left(\pmb V_t(i)\right)=1,\,\forall i,t, \label{Rank-one} \\
        &&&
        \pmb V_t(i)\succeq\pmb 0,\,\forall i,t,
        \end{align}
\end{subequations}
where $\pmb V_t(i)(n,n)$ denotes the $(n,n)$-entry of matrix $\pmb V_t(i)$.

\par
However, the resulting problem $\mathscr{P}_{2.3}$ is still nonconvex due to the rank-one constraints in (\ref{Rank-one}). Fortunately, a feasible solution to this problem can be generated by simply dropping the rank-one constraints via the SDR technique \cite{Matrix-lifting}. The resulting SDP problem can be solved efficiently by existing convex optimization solvers such as CVX \cite{CVX}. Although the Gaussian randomization technique in SDR may generate a suboptimal solution, we can still observe that a significant learning performance gain of the RIS-enabled FL system can be achieved through numerical experiments. 

\begin{algorithm}[h]
	\caption{Two-step alternating minimization for solving problem $\mathscr{P}$ (\ref{P}) in the RIS-enabled FL system.}
	\KwIn{The phase-shift matrix $\pmb\Theta_0$, the privacy level $(\epsilon,\delta)$, and the maximum iteration number $J$.}
	\BlankLine
	Initialize $j\gets 1$ to denote the number of iterations. \\
	\Repeat{(\ref{P-obj}) is below a given threshold or $j>J$}{
		Given $\pmb\Theta_{j-1}$, obtain solution $(\eta_j,\sigma_{k,j})$ by solving problem $\mathscr{P}_1$ (\ref{P1}). \\
		Given $(\eta_j,\sigma_{k,j})$, obtain solution $\pmb\Theta_j$ by solving problem $\mathscr{P}_2$ (\ref{P2}). \\
		Update $j\gets j+1$.}
	\BlankLine
	\KwOut{$\eta_t(i)\gets\eta_j$, $\pmb\Theta_t(i)\gets\pmb\Theta_j$, $\sigma_{k,t}(i)\gets\sigma_{k,j}$.}
\end{algorithm}

\par
Without causing confusion, we omit parameter $i$ for notational ease, e.g., $\pmb\Theta_t(i)$ as $\pmb\Theta_t$. We present the two-step iterative framework in \pmb{Algorithm 1} for solving problem $\mathscr{P}$. The computational cost of the proposed algorithm consists of solving a sequence of search programs for $\mathscr{P}_1$ and SDR programs \cite{Matrix-lifting} for $\mathscr{P}_2$. From the complexity analysis of typical interior-point method, the worst-case complexity of \pmb{Algorithm 1} is obtained as $\mathcal{O}\left(JIT\max\{K,N+1\}^4\sqrt{N+1}\log_2(1/o)\right)$, where $o>0$ denotes the solution accuracy of SDR.

\subsection{Discussion on the Causal Approximation}
Based on the above observations, \pmb{Algorithm 1} solves the nonconvex problem $\mathscr{P}$ with the knowledge of all channel responses $\{h_{k,t}^d(i),\pmb m_t(i),\pmb h_{k,t}^r(i)\}$ and local gradient information $\{\zeta_{k,t}(i),\gamma_t\}$, yielding an impractical non-causal system. We thus present some feasible prediction methods to approximate this non-causal system.

\par
At the beginning of the training process, we need to address the following two questions for our proposed scheme: 
\begin{itemize}
\item How to approximate the future gradient information?
\item How to predict the future channel responses?
\end{itemize}
The first question aims at providing upper bounds for the $l_2$-sensitivity defined in (\ref{Delta}) and the local gradients. To this end, there exist multiple adaptive methods for practical use, such as the $L$-Lipschitz based technique \cite{L-bound1,L-bound2}, and the adaptive clipping method \cite{Clipping-gradient} with a well-designed critical threshold based on the current gradients.

\par
For the second question, this is essentially related to wireless channel prediction. To estimate channels accurately, many promising methods have been proposed. For instance, the authors in \cite{FLPrivacy-free} specialized an accumulated mechanism where the future channel responses are assumed to be consistent with the current ones. But this assumption fails to exploit the correlation among channel blocks. To explore the inherent relationship between each communication block, several prediction techniques have been developed in \cite{Channel-prediction1,Channel-prediction2}, including the Gauss-Markov process channel modeling method \cite{Channel-prediction-Kalman2} and Wiener or Kalman filtering based approach \cite{Channel-prediction-Kalman1,Channel-prediction-Kalman2}. Besides, the deep learning methods turn out to be powerful to improve channel prediction accuracy \cite{Channel-prediction-ML1,Channel-prediction-ML2,Deep-learning-CSI}. We thus leave the design of effective casual models as our future work.

\section{Simulation Results} \label{V}
In this section, we present the simulation results to demonstrate the advantages of the RIS-enabled FL systems. Besides, the effectiveness of our proposed two-step alternating minimization framework will also be illustrated. Simulations are conducted using Matlab R2021b and the code is available at https://github.com/MengCongWo/FL\_Privacy\_blockcrossing.

\subsection{Simulation Settings}
We propose to gain insights into the impact of deploying RIS on learning accuracy based on ridge regression, whose sample-wise loss function is given as
\begin{equation}
f(\pmb x,y;\pmb\theta)=\frac{1}{2}\left|\left|\pmb\theta^T\pmb x-y\right|\right|^2+\upsilon||\pmb\theta||^2, \label{Ridge-regression}
\end{equation}
where $\upsilon=5\times 10^{-5}$ denotes penalty coefficient. We randomly generate a dataset of scale $|\mathcal{D}|=10^4$ and set the model dimension $d$ to be $10$. Specifically, the training samples $\pmb x$ are drawn i.i.d. according to $\mathcal{N}(\pmb 0,\pmb I_d)$ while the corresponding true label $y$ is given by $y=\pmb x(2)+3\pmb x(5)+0.2z_0$, where the data noise $z_0$ is drawn i.i.d. from $\mathcal{N}(0,1)$. We uniformly divide the total dataset $\mathcal{D}$ into $K$ local datasets, and let the ratio
\begin{equation}
r=\underset{k\in[K]}{\max}\frac{|\mathcal{D}_k|}{|\mathcal{D}|}\in[0.1,1)
\end{equation}
denote the system heterogeneity representing various storage capacities of the edge devices and larger $r$ yields higher heterogeneity. The global loss function $F$ is $\mu$-strongly convex, $L$-Lipschitz smooth and differentiable, where $\mu$ and $L$ are specified by the smallest and largest eigenvalues of the data Gramian matrix $\pmb X^T\pmb X/D+2\upsilon\pmb I_d$ with the data matrix $\pmb X\!=\![\pmb x_1,\ldots,\pmb x_D]$. Besides, the optimal $\pmb\theta^*$ of (\ref{Ridge-regression}) is $\pmb\theta^*\!=\!(\pmb X^T\pmb X+2D\upsilon\pmb I_d)^{-1}\pmb X^T\pmb y$ with true label vector $\pmb y=[y_1,\ldots,y_D]^T$. As for the definitions of $\gamma_t$ and $\zeta_{k,t}(i)$, we use the simple upper bounds as in \cite[Section V-A]{FLPrivacy-free}. Furthermore, we use the normalized optimality gap defined as $[F(\pmb\theta_{T+1})-F(\pmb\theta^*)]/F(\pmb\theta^*)$ to measure the learning accuracy.

\par
The wireless channels are assumed to suffer from Rice fading \cite{RIS-principle}, and the channel coefficients are given by
\begin{equation}
\pmb\varrho=\sqrt{\frac{\kappa}{1+\kappa}}\pmb\varrho_{LoS}+\sqrt{\frac{1}{1+\kappa}}\pmb\varrho_{NLoS}, \label{Rice-fading}
\end{equation}
where $\kappa$ represents the Rician factor, $\pmb\varrho_{LoS}$ denotes the deterministic line-of-sight (LoS) component, and $\pmb\varrho_{NLoS}$ denotes the non-line-of-sight (NLoS) component. For simplicity, we set $\pmb\varrho_{LoS}=1$ and generate $\pmb\varrho_{NLoS}$ by autoregressive (AR) scheme, which is defined as
\begin{equation}
\pmb\varrho_{NLoS}(i)=\rho\pmb\varrho_{NLoS}(i-1)+\sqrt{1-\rho^2}\pmb\vartheta(i),
\end{equation}
where $\rho$ denotes the correlation coefficient  and $\pmb\vartheta(i)$ is drawn based on an innovation process satisfying $\pmb\vartheta(i)\sim\mathcal{N}(\pmb 0,\pmb I)$. Hence, the channel coefficients in the $i$-th communication block of learning round $t$ are given by $\pmb m_t(i)=\pmb\varrho_{IB}$, $\pmb h_{k,t}^r(i)=\pmb\varrho_{DI}$, and $h_{k,t}^d(i)=\varrho_{DB}$, whose Rician factors are given by $\kappa_{IB}$, $\kappa_{DI}$, and $\kappa_{DB}$, respectively. The parameter $\rho$ is set to $1$ for simplicity since $\rho$ has no discernible effect on the performance in the case where perfect CSI is available for all edge devices.

\par
Under the above simulation settings, we mainly compare the learning accuracy under the following three RIS schemes:
\begin{itemize}
\item \pmb{RIS-enabled FL system with DP}: In this scheme, Algorithm 1 is applied to solve problem $\mathscr{P}$.
\item \pmb{RIS-enabled FL system without DP}: In this scheme, we remove the privacy constraints for the baseline without DP, i.e., only the second term in (\ref{Optimal-eta2}) exists.
\item \pmb{FL system without RIS}: In this scheme, we set $\pmb\Theta_t(i)=\pmb 0$ for the baseline without RIS but with DP \cite{FLPrivacy-free}.
\end{itemize}

\begin{table}[h]\footnotesize	
	\renewcommand\arraystretch{1.2}	
	\vspace*{-0.6em}
	\centering 
	\begin{tabular}{|>{\color{black}}c |>{\color{black}}c||>{\color{black}}c|>{\color{black}}c||>{\color{black}}c|>{\color{black}}c|}
		\hline
		\textbf{Parameter} & \textbf{Value} & \textbf{Parameter} & \textbf{Value} &
		\textbf{Parameter} & \textbf{Value} \\
		\hline
		$K$ & $10$ & $N$ & $30$ & $I$ & $5$ \\
		\hline
		$D_k$ & $1000$& $r$ & $0.1$ & $T$ & $30$ \\
		\hline
		$\kappa_{IB}$ & $5$ & $\kappa_{DI}$ & $0$ & $\kappa_{DB}$ & $5$ \\
		\hline
		$\epsilon$ & $20$ & $\delta$ & $0.01$ & $\text{SNR}$ & $35$ \\
		\hline
	\end{tabular}
	\caption{System parameters.}
	\label{table2}
\end{table}
\vskip 1mm

\begin{figure*}[t] 
	\begin{minipage}{0.49\linewidth}
		\centerline{\includegraphics[scale=0.66]{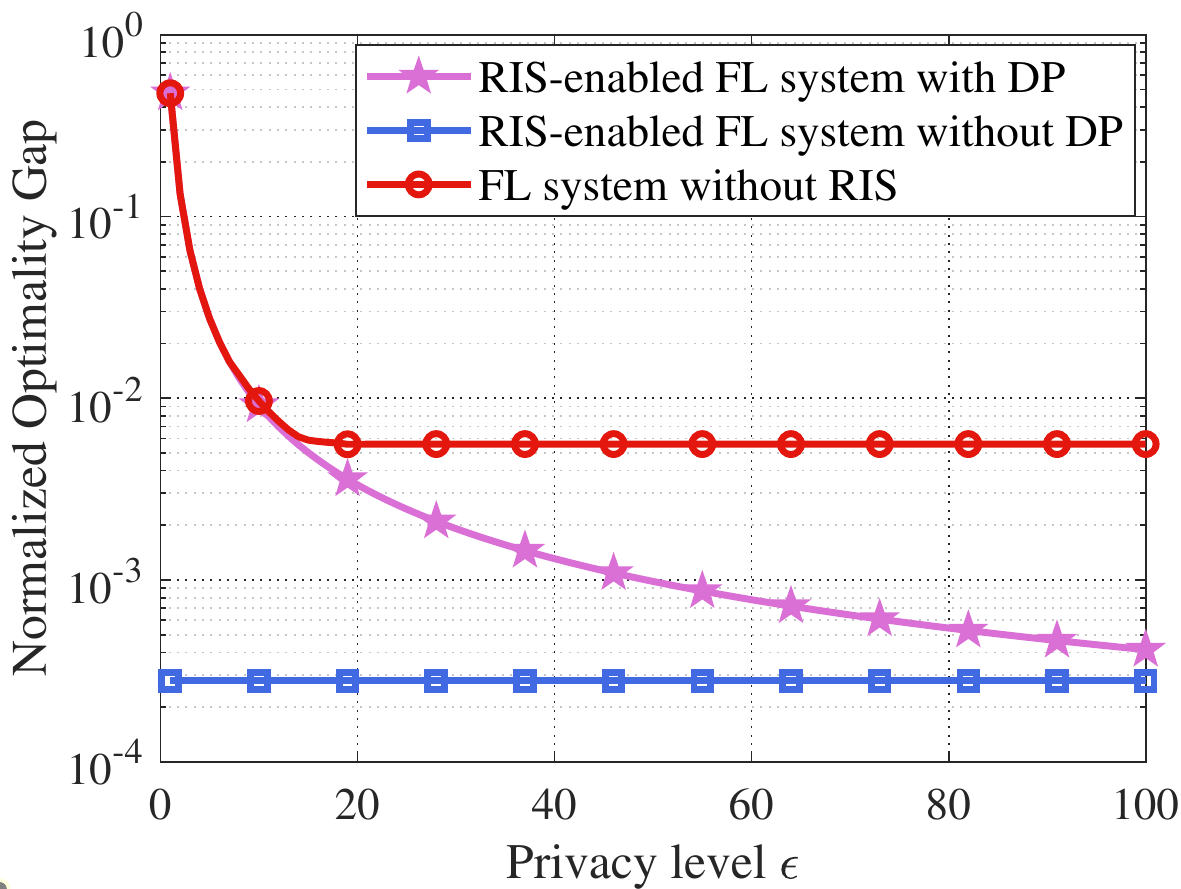}}
		\caption{Learning accuracy of three RIS schemes under different privacy levels.}
		\label{Sim-privacy-level}
	\end{minipage}
	\hfill
	\begin{minipage}{0.49\linewidth}
		\centerline{\includegraphics[scale=0.64]{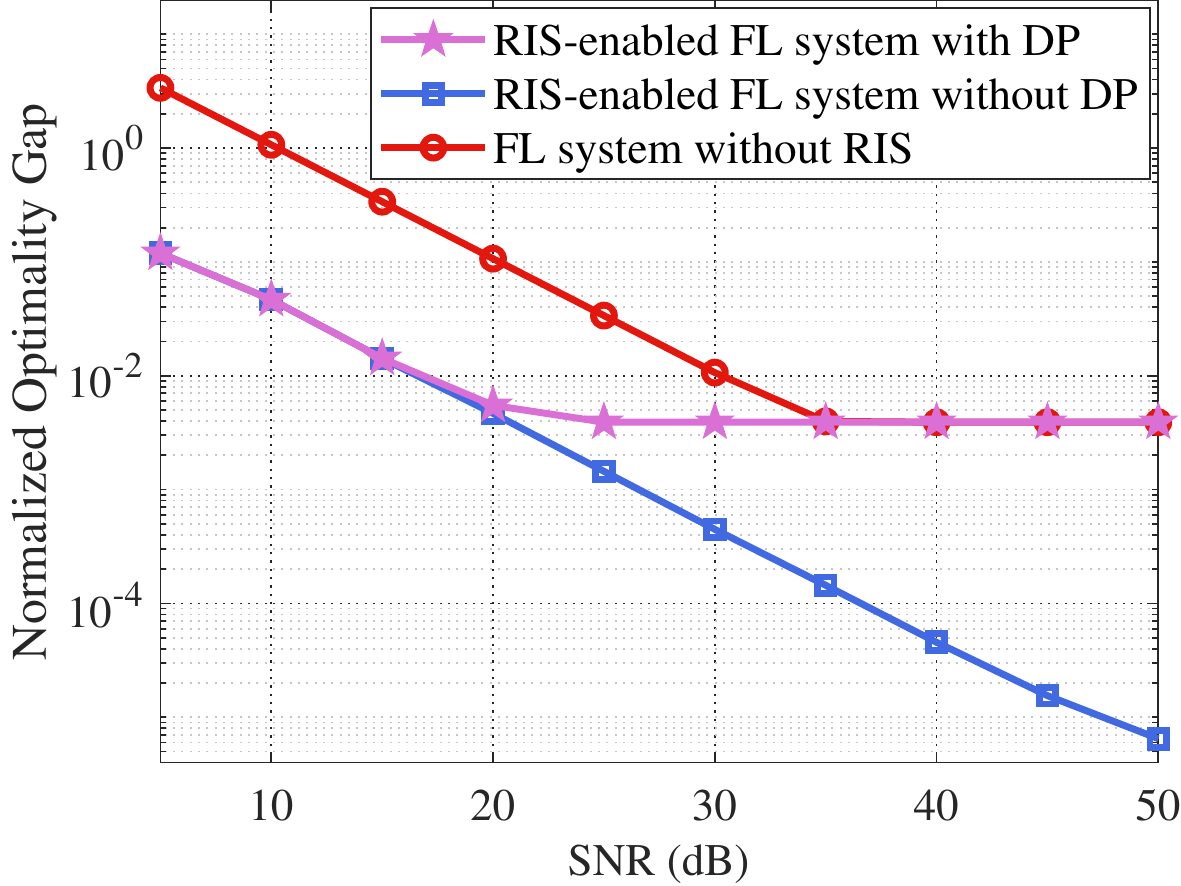}}
		\caption{Learning accuracy of three RIS schemes under various noisy environments.}
		\label{Sim-SNR-level}
	\end{minipage} 
	\vskip 20pt
	\begin{minipage}{0.49\linewidth}
		\centerline{\includegraphics[scale=0.65]{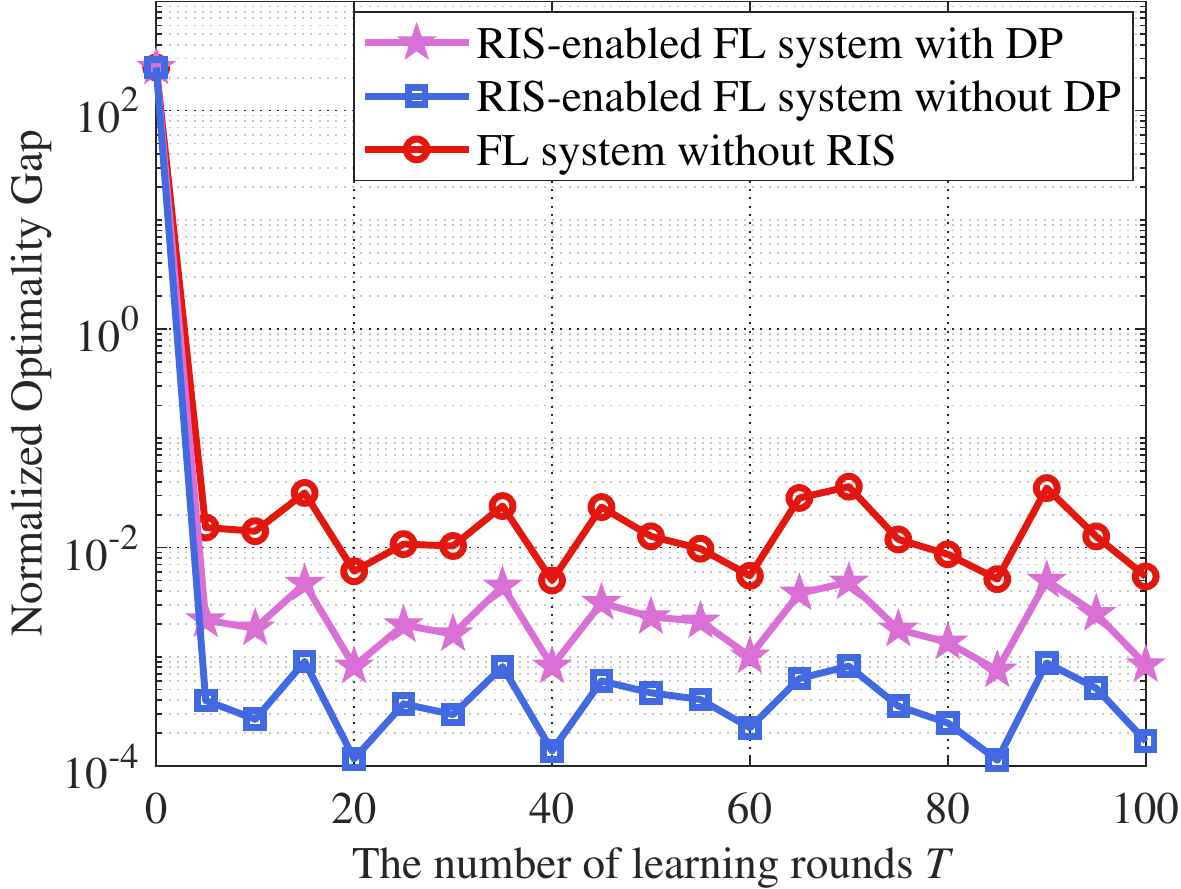}}
		\caption{Learning accuracy of three RIS schemes under different numbers of learning rounds.}
		\label{Sim-learning-round-vn}
	\end{minipage}
	\hfill
	\begin{minipage}{0.49\linewidth}
		\centerline{\includegraphics[scale=0.65]{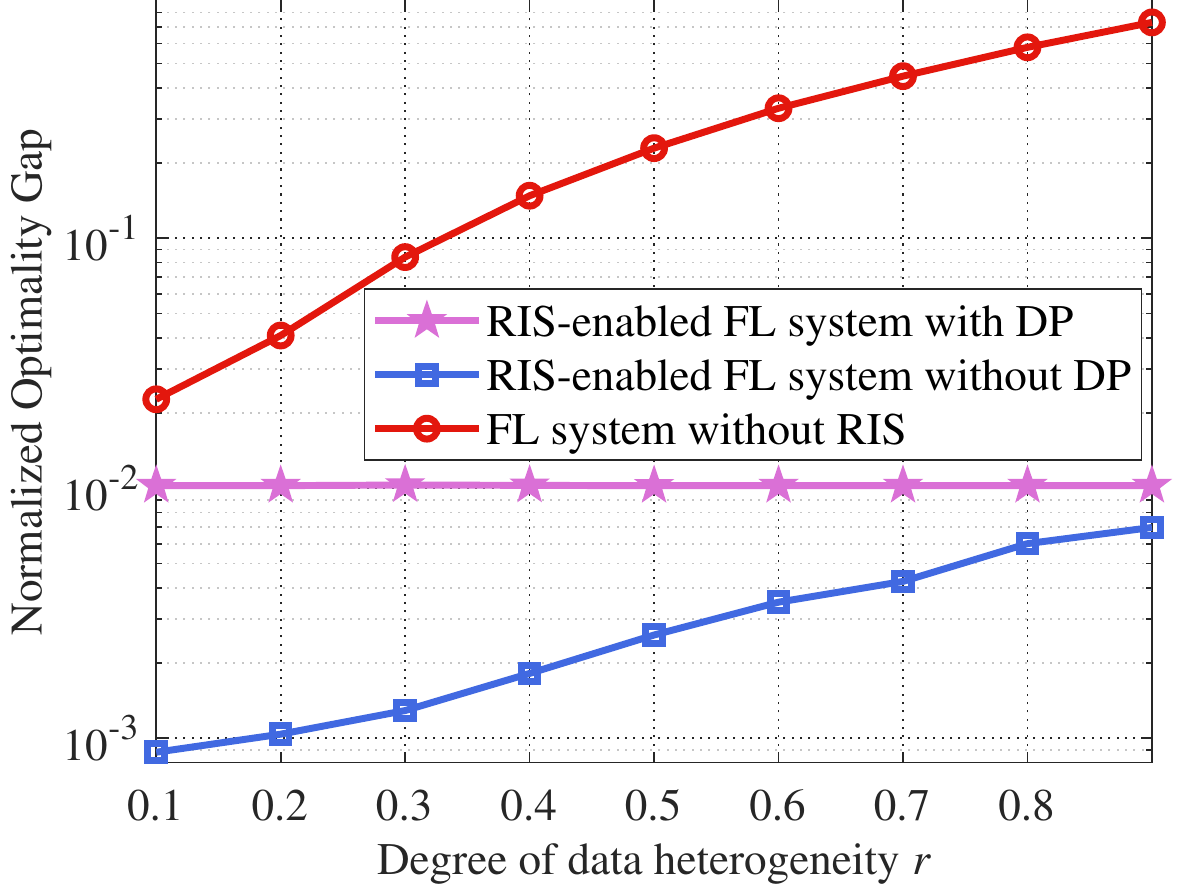}}
		\caption{Learning accuracy of three RIS schemes under various degrees of heterogeneity.}
		\label{Sim-heterogeneity}
	\end{minipage}
\end{figure*}

\par
The system parameters used in the simulations are summarized in Table \ref{table2} for reference. We also simulate a practical IoT-based affective computing FL scenario \cite{Wearable} with high dimensional and nonconvex settings, which will be presented in Section \ref{V}-C.

\subsection{Learning Accuracy under Different Conditions}
To investigate the improvements of learning accuracy achieved by RIS, we compare the learning accuracy of the three RIS schemes based on the ridge regression model from five aspects: privacy levels $\epsilon$, SNR levels, the number of learning rounds $T$, the impact of system heterogeneity $r$, and the number of RIS elements $N$. We also simulate a high dimensional setting based on the CIFAR-10 dataset and a nonconvex setting based on the MNIST dataset.

\par
We first illustrate the relationship between learning accuracy and privacy level $\epsilon$ in Fig. \ref{Sim-privacy-level}. As an extension of Fig. \ref{fig3}, we demonstrate that with the relaxation of privacy level, i.e., for larger values of $\epsilon$, the RIS-enabled FL system achieves a higher learning performance gain compared to the case without RIS which is restricted by unfavorable wireless channel propagations. Instead, RIS has no impact on the learning performance when the privacy level is strict, i.e., the privacy field presented in \pmb{Theorem 3}. Besides, comparing the learning accuracy of the RIS-enabled FL system with DP and the case without DP, we note that the privacy guarantee is achieved at the expense of losing learning accuracy, i.e., higher privacy (smaller value of $\epsilon$) indicates lower learning accuracy.

\par
In Fig. \ref{Sim-SNR-level}, we focus on the learning accuracy of the RIS-enabled FL systems in the noisy wireless environment, i.e., various levels of SNR. It shows that, with the increase of SNR, the DP-restricted schemes achieve the same learning accuracy, i.e., around $4\times10^{-3}$, under the same privacy and power constraints. Besides, we note that the RIS-enabled FL systems achieve a notable improvement in learning accuracy compared to the FL systems without RIS, especially when SNR is low. This indicates that the RIS-enabled FL systems are more adaptable to noisy wireless channels. Furthermore, Fig. \ref{Sim-SNR-level} shows that the RIS-enabled FL system with DP achieves close accuracy with the case without DP when $\text{SNR}<20$. This indicates that privacy guarantee can be ensured freely \cite{FLPrivacy-free}.

\par
The impact brought by the number of learning rounds $T$ or total communication blocks on the learning accuracy is illustrated in Fig. \ref{Sim-learning-round-vn}, where the oscillation comes from the newly introduced wireless noise in each communication block. We observe that without considering the effect of noise, the number of learning rounds $T$ has no evident effect on the accuracy after running enough number of training rounds for gradient descent due to the adaptive selection of $\{\beta,\tau_t\}$ performed in \pmb{Theorem 3}. To make it precise, $\beta$ can flexibly adjust the selection of $\eta_t(i)$ to ensure that it satisfies all constraints with the increase of $T$. Moreover, the RIS-enabled FL systems achieve higher accuracy and better robustness to the newly added wireless channel noise.

\par
Fig. \ref{Sim-heterogeneity} shows the learning accuracy of three RIS schemes among various degrees of system heterogeneity $r$. For simplicity, one edge device is assigned more data points while the rest are evenly distributed to the remaining edge devices \cite{FLPrivacy-free}. Simulation results show the negative influence brought by the high-heterogenous system on accuracy. Besides, compared to the FL systems without RIS, the RIS-enabled FL systems are more robust due to the interaction between channel condition and system heterogeneity in (\ref{Optimal-eta2}). Furthermore, Fig. \ref{Sim-heterogeneity} serves as a promising result for the design of anti-heterogeneous FL systems by deploying RIS to weaken heterogeneity.

\begin{figure}
	\begin{minipage}{\linewidth}
		\centerline{\includegraphics[scale=0.65]{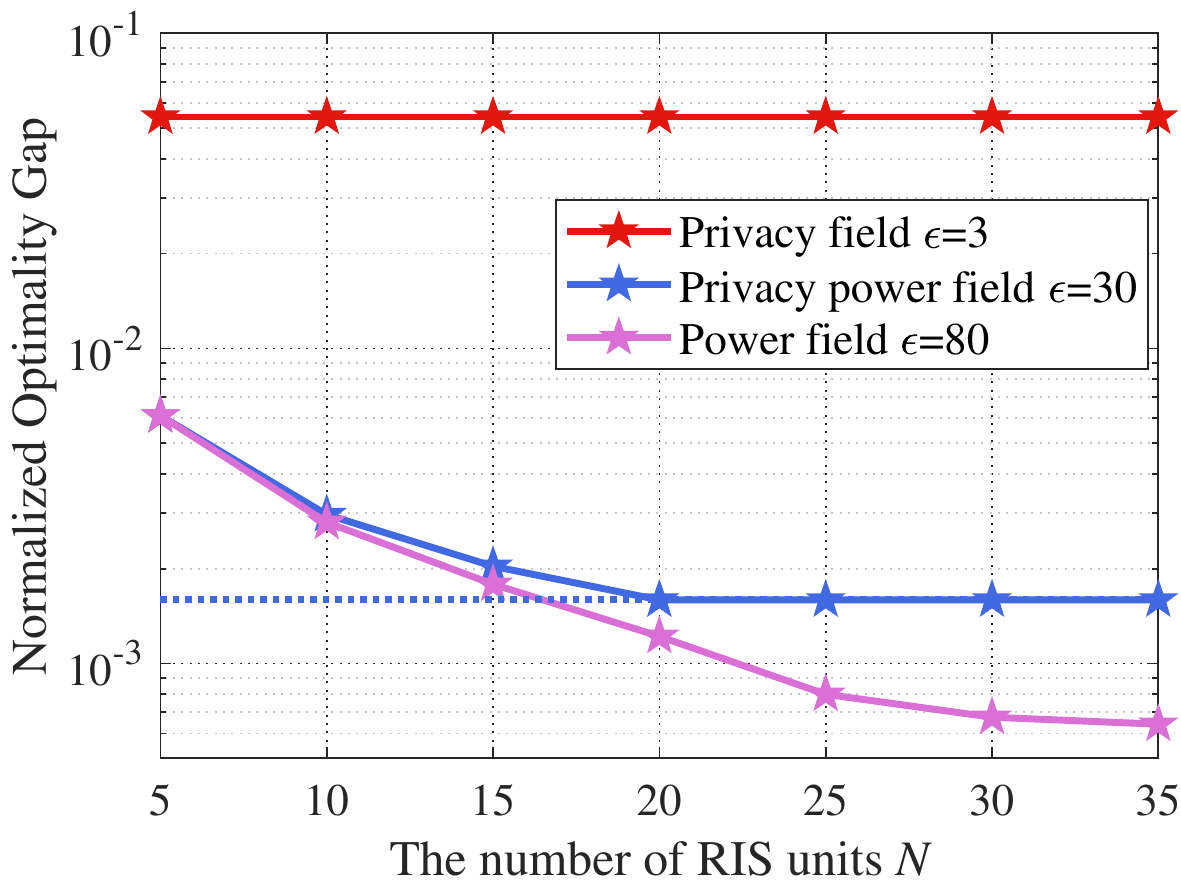}}
		\caption{Learning accuracy of the RIS-enabled FL system with DP under different privacy level $\epsilon$.}
		\label{Sim-RIS-size}
	\end{minipage}
\end{figure}

\begin{table*}[t]
	\centering
	\begin{spacing}{1.3}
		\footnotesize
		\arrayrulecolor{black}
		\begin{tabular}{|c|c|c|c|c|}
			\hline
		    & \makecell[c]{High privacy, Low SNR \\ $(\epsilon=10,\,\text{SNR}=3\,\text{dB})$} & \makecell[c]{High privacy, High SNR \\ $(\epsilon=10,\,\text{SNR}=10\,\text{dB})$} & \makecell[c]{Low privacy, Low SNR \\ $(\epsilon=80,\,\text{SNR}=3\,\text{dB})$} & \makecell[c]{Low privacy, High SNR \\ $(\epsilon=80,\,\text{SNR}=10\,\text{dB})$} \\
			\hline
			RIS-enabled FL system with DP & 0.7308 & 0.7308 & 0.7620 & 0.7716  \\
			\hline
			RIS-enabled FL system without DP & 0.7692 & 0.7716 & 0.7692 & 0.7716  \\
			\hline
			FL system without RIS & 0.6827 & 0.7260 & 0.6851 & 0.7428  \\
			\hline
			\end{tabular}
	\end{spacing}
	\caption{Classification accuracy of various FL systems for stress detection dataset under different privacy and SNR levels.}
	\label{table3}
\end{table*}

\par
Fig. \ref{Sim-RIS-size} illustrates the impacts of the number of reflecting elements $N$ at RIS on the learning accuracy under three fields presented in \pmb{Theorem 3}. From Fig. \ref{Sim-RIS-size}, we verify that the learning accuracy cannot be significantly improved when the privacy level is extremely strict (privacy field), while the benefits of deploying RIS emerge when the privacy level is relaxing. Besides, we note that with the increase of $N$, the reconfigurable capability of RIS enhances, yielding high learning accuracy. However, due to the existence of privacy constraints, this improvement of learning accuracy may slow down when $N$ becomes large.

\par
Fig. \ref{MNIST} and Fig. \ref{CIFAR10} show the learning accuracy of three RIS schemes on the MNIST and CIFAR-10 datasets, respectively. Specifically, we train a logistic regression model on the CIFAR-10 dataset for high dimensional settings and a neural network on MNIST for nonconvex settings. The neural network model parameter $\pmb\theta$ with dimension $d\!=\!79510$ is comprised of the first-layer parameter $\pmb W_1\!\in\!\mathbb{R}^{100\times 785}$ and the second-layer parameter $\pmb W_2\!\in\!\mathbb{R}^{10\times 101}$. For high-privacy and noisy environment, we reset the privacy level $\epsilon$ to be $10$, the SNR level to be $5\,\text{dB}$, the number of communication blocks in one learning round $I$ to be $10$, and the Rice factor of $\pmb\varrho_{DB}$ to be $0$ (Rayleigh channel). Fig. \ref{MNIST} and Fig. \ref{CIFAR10} indicate that the RIS-enabled FL system with DP enjoys higher accuracy and better robustness to the wireless channel noise compared to the FL system without RIS. Besides, the accuracy of the RIS-enabled FL system is close to the case without DP. Furthermore, we can observe that our proposed methods can achieve good performance even for nonconvex learning tasks.

\begin{figure}
	\begin{minipage}{\linewidth}
		\centerline{\includegraphics[scale=0.65]{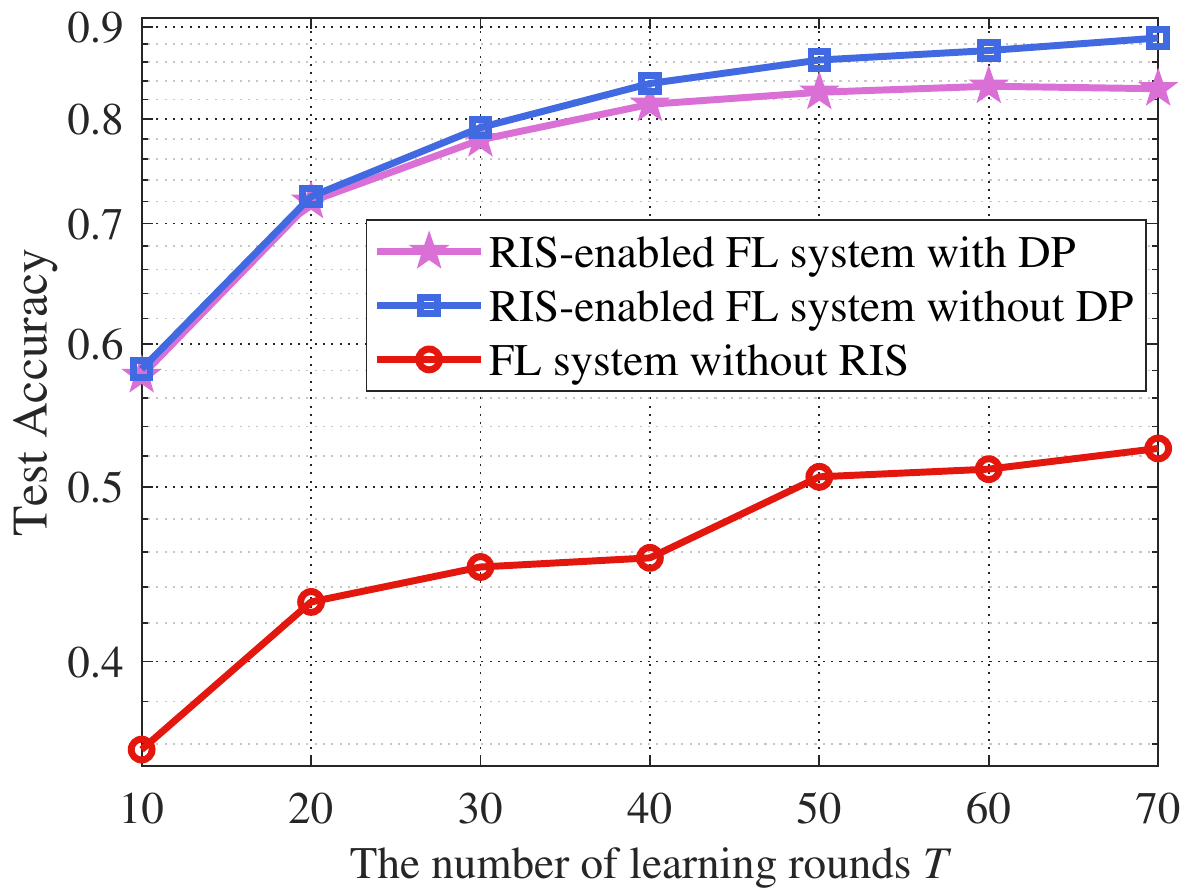}}
		\caption{Test accuracy of the three RIS schemes with MNIST dataset.}
		\label{MNIST}
	\end{minipage}
	\vskip 18pt
	\begin{minipage}{\linewidth}
		\centerline{\includegraphics[scale=0.65]{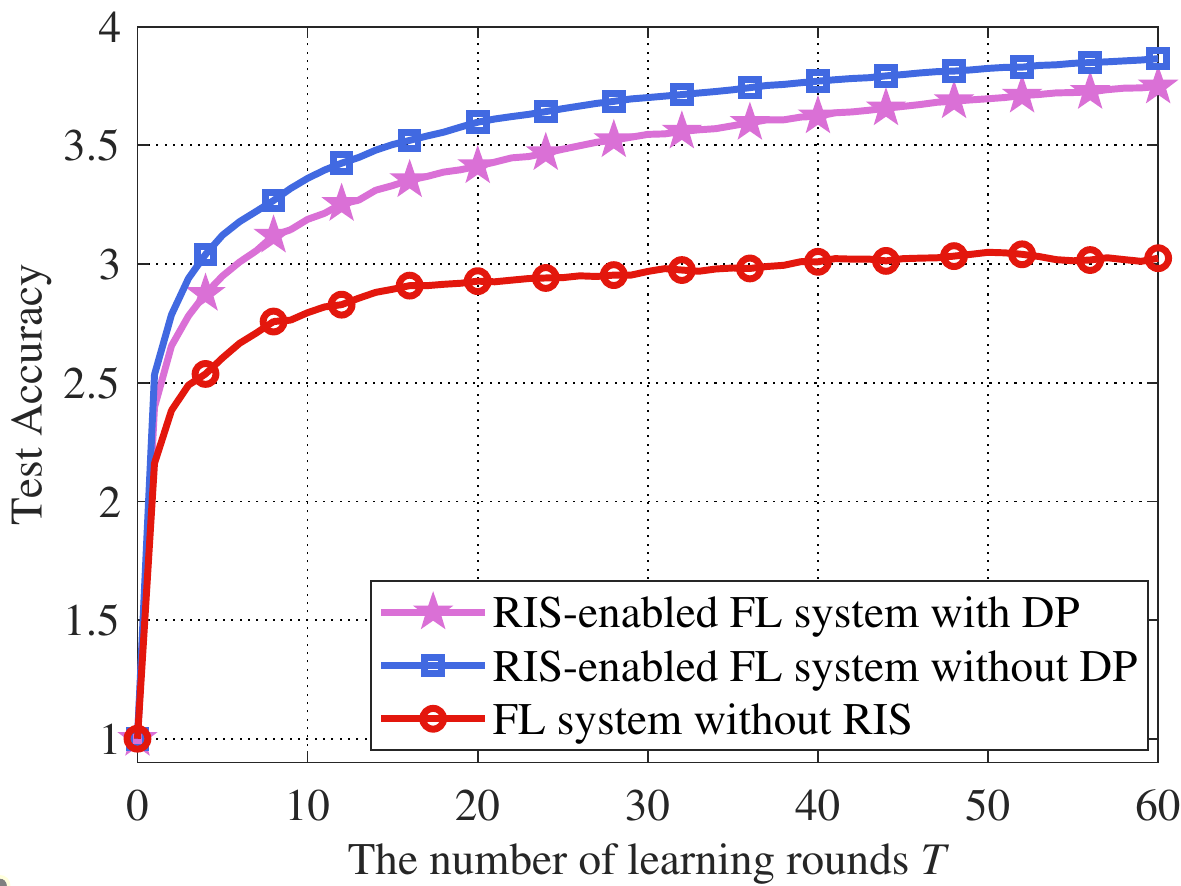}}
		\captionsetup{labelfont={color=black}} 
		\caption{Relative test accuracy of the three RIS schemes with CIFAR-10 dataset.}
		\label{CIFAR10}
	\end{minipage}
\end{figure}

\subsection{Differentially Private FL for Wearable IoT-Based Biomedical Monitoring}
Driven by the massive amounts of biomedical data generated from widespread IoT edge devices, affective computing has become a field of interest in biomedical informatics \cite{AC-survey}. Affective computing is a promising technique that recognizes a person's emotional state based on the physiological signals and images, e.g., photoplethysmogram (PPG) based heart activity, to detect the link between physical health and emotional states. To integrate distributed health data analysis and privacy protection, the authors in \cite{Wearable} established a practical IoT-based wearable biomedical monitoring scenario and provided a first attempt on the application of privacy-preserving FL to IoT-based affective computing. Specifically, 32 public school teachers from Turkey participated in the IoT-based stress detection experiment, where each teacher was required to wear wrist-worn wearable IoT devices and experience three different emotional sessions (baseline, lecture, exam). After each session, the heart activity features collected by wearable IoT devices and the mental state report from each teacher were gathered to construct the stress detection dataset. Besides, the collected data is distributed unevenly across $K=26$ private edge devices, which are orchestrated by an edge server to collaboratively train a global stress detection model. Inspired by this, we simulate a practical wireless IoT-based stress detection FL scenario with DP guarantees to test the learning performance of three RIS schemes.

\par
We train a neural network on the stress detection dataset whose model parameter $\pmb\theta$ with dimension $d=12002$ consists of the the first-layer parameter $\pmb W_1\in\mathbb{R}^{100\times 17}$, the second-layer parameter $\pmb W_2\in\mathbb{R}^{100\times 101}$ and the third-layer parameter $\pmb W_3\in\mathbb{R}^{2\times 101}$. The classification accuracy on the stress detection dataset under different levels of privacy and SNR is presented in Table \ref{table3}. It can be observed that the RIS-enabled FL system with DP is capable of outperforming the FL system without RIS in terms of classification accuracy and robustness to wireless noise. The RIS-enabled system with DP can also achieve similar performance compared with the system without DP.

\section{Conclusions}\label{VI}
In this paper, we developed an RIS-enable differentially private FL system by leveraging the reconfigurability of channel propagation via RIS and the property of waveform superposition via AirComp. We theoretically characterized the convergence behavior of the over-the-air FL algorithm, for which a system optimization problem was established to achieve better learning accuracy under the privacy and transmit power constraints. We further proposed a two-step low-rank optimization framework to minimize the learning optimality gap, by jointly optimizing the power allocation, artificial noise, and reflecting coefficients of RIS during the learning procedure. Through convergence analysis and system optimization, we revealed that the RIS-enabled FL system is able to achieve higher system SNR and boost the receive signal power,  thereby improving learning accuracy performance while satisfying the privacy requirements. Numerical results also demonstrated that the proposed FL system can achieve higher learning accuracy and privacy than the benchmarks.

\appendices
\section{Proof of Theorem 1} \label{Proof-th1}
For each edge device $k$, we note that the local gradient $\pmb g_{k,t}$ is divided into $I$ $e$-dimensional sub-signals. Based on (\ref{Estimated-g-entry}), we arrive at the equivalent channel noise vector
\begin{equation}
\hat{\pmb w}_t=\left[\frac{\pmb w_t^T(1)}{\sqrt{\eta_t(1)}},\ldots,\frac{\pmb w_t^T(I)}{\sqrt{\eta_t(I)}}\right]^T.
\end{equation}
Then the estimated $\hat{\pmb g}_t$ at the edge server can be expressed as
\begin{equation}
\hat{\pmb g}_t=\pmb g_t+\frac{1}{KD}\sum_{k=1}^{K}\text{Re}\{\pmb n_{k,t}\}+\frac{1}{KD}\text{Re}\left\{\hat{\pmb w}_t\right\}.
\end{equation}
Hence, under Assumption 2, we have the following inequality
\begin{align}
F(&\pmb\theta_{t+1})\leq F\left(\pmb\theta_{t}\right)-\lambda\pmb g_t^T\hat{\pmb g}_{t} +\frac{L\lambda^2}{2}\left|\left|\hat{\pmb g}_{t}\right|\right|^2 \nonumber \\
&=F\left(\pmb\theta_{t}\right)-\lambda\left|\left|\pmb g_t\right|\right|^2-\frac{\lambda}{KD}\sum_{k=1}^{K}\,\text{Re}\left\{\pmb g_t^T\pmb n_{k,t}\right\}-\frac{\lambda\text{Re}\left\{\pmb g_t^T\hat{\pmb w}_t\right\}}{KD} \nonumber \\
&\qquad\qquad+\frac{L\lambda^2}{2}\Bigg|\Bigg| \pmb g_t+\frac{1}{KD}\sum_{k=1}^{K}\text{Re} \left\{\pmb n_{k,t}\right\}+\frac{1}{KD}\text{Re}\left\{\hat{\pmb w}_t\right\}\Bigg|\Bigg|^2. \nonumber
\end{align}

\par
Recalling that $\pmb n_{k,t}\!=\![\pmb n_{k,t}^T(1),\ldots,\pmb n_{k,t}^T(I)]^T$ with $\pmb n_{k,t}(i)\!\sim\!\mathcal{CN}(\pmb 0,\sigma_{k,t}^2(i)\,\pmb I_e)$ and given $\lambda\!=\!1/L$, by taking the expectations over the additive noise (including the artificial and wireless channel noise) on both sides of the above inequality, we obtain
\begin{equation}
\begin{split}
&\mathbb{E}[F(\pmb\theta_{t+1})]\leq F\left(\pmb\theta_{t}\right)+\left(\frac{L}{2}\lambda^2-\lambda\right)\left|\left|\pmb g_t\right|\right|^2+\frac{L\lambda^2}{2}\times \nonumber \\
&\qquad\frac{1}{(KD)^2}\mathbb{E}\Bigg[\Big|\Big|\sum_{k=1}^{K}\text{Re}\left\{[\pmb n_{k,t}^T(1),\ldots,\pmb n_{k,t}^T(I)]^T\right\}+\text{Re}\left\{\hat{\pmb w}_t\right\}\Big|\Big|^2\Bigg] \\
&\quad\leq F\left(\pmb\theta_{t}\right)\!-\!\frac{1}{2L}\left|\left|\pmb g_t\right|\right|^2\!+\!\frac{e}{4L(KD)^2}\left(\sum_{k=1}^{K}\sum_{i=1}^{I}\sigma_{k,t}^2(i)\!+\!\sum_{i=1}^{I}\frac{N_0}{\eta_t(i)}\right). \\
\end{split}
\end{equation}

\par
According to (\ref{PL}), by setting $\pmb\theta=\pmb\theta^*$ and $\pmb\theta'=\pmb\theta_{t}$, we have
\begin{equation}
\frac{1}{2}\left|\left|\nabla F\left(\pmb\theta_{t}\right)\right|\right|^2\geq\mu\left[F\left(\pmb\theta_{t}\right)-F^*\right].
\end{equation}
Therefore, subtracting the optimal value $F^*$ at both sides yields
\begin{equation}
\begin{split}
\mathbb{E}\left[F\left(\pmb\theta_{t+1}\right)\right]&-F^*\leq\left(1-\frac{\mu}{L}\right)\left[F\left(\pmb\theta_{t}\right)-F^*\right]+ \\
&\qquad\frac{e}{4L(KD)^2}\left(\sum_{k=1}^{K}\sum_{i=1}^{I}\sigma_{k,t}^2(i)\!+\!\sum_{i=1}^{I}\frac{N_0}{\eta_t(i)}\right). \label{Iteration-proof}
\end{split}
\end{equation}

\par
Finally, the expected result is obtained by applying (\ref{Iteration-proof}) iteratively through $T$ learning rounds.  \qed

\section{Proof of Theorem 2} \label{Proof-th2}
The proving process is based on the advanced literature \cite{FLPrivacy-free,Differential-privacy1,DP-proof}. We focus on the privacy constraint of the $k$-th edge device based on the aggregated signals $\pmb r=\{\pmb r_t\}_{t=1}^{T}$ at the edge server. It is worth noting that, without traditional lightweight assumption, our transmission model faces a more practical scenario where one learning round consists of $I$ communication blocks.

\par
According to (\ref{Privacy-loss}), the privacy loss after $T$ learning rounds can be expressed as
\begin{equation}
\begin{split}
\mathcal{L}_{\mathcal{D}_k,\mathcal{D}_k'}\left(\pmb r\right)&=\ln\left\{\prod_{t=1}^T\frac{\Pr\left[\pmb r_{t}\big|\pmb r_{t-1},\ldots,\pmb r_{1};\mathcal{D}_k\right]}{\Pr\left[\pmb r_{t}\big|\pmb r_{t-1},\ldots,\pmb r_{1};\mathcal{D}_k'\right]}\right\}  \\
&=\sum_{t=1}^{T}\ln\left\{\frac{\Pr\left[\pmb r_{t}\big|\pmb r_{t-1},\ldots,\pmb r_{1};\mathcal{D}_k\right]}{\Pr\left[\pmb r_{t}\big|\pmb r_{t-1},\ldots,\pmb r_{1};\mathcal{D}_k'\right]}\right\}. \label{Privacy-loss-proof1}
\end{split}
\end{equation}

\par
The effective noise $\pmb q_t$ in (\ref{Effective-noise}) is a complex Gaussian random vector with statistically independent elements. Besides, recalling that the noise vector $\pmb w_{t}(i)\!\sim\!\mathcal{CN}(\pmb 0,N_0\pmb I_e)$ and $\pmb n_{k,t}(i)\!\sim\!\mathcal{CN}(\pmb 0,\sigma_{k,t}^2(i)\,\pmb I_e)$, we obtain the pseudo-covariance matrix of $\pmb q_t$, which is given by
\begin{equation}
\pmb J_t=\mathbb{E}\left[\pmb q_t\left(\pmb q_t\right)^T\right]=\pmb 0, \label{Pseudo-covariance-matrix-proof}
\end{equation}
and the covariance matrix of $\pmb q_t$ can be expressed as
\begin{equation}
\pmb\Sigma_t=\mathbb{E}\left[\pmb q_t\left(\pmb q_t\right)^H\right]=
\begin{bmatrix}
\pmb\Lambda_t(1) & \cdots & \pmb 0  \\
\vdots & \ddots & \vdots  \\
\pmb 0 & \cdots & \pmb\Lambda_t(I)  \\
\end{bmatrix}, \label{Covariance-matrix}
\end{equation}
where 
\begin{equation}
\pmb\Lambda_t(i)=\left(\eta_t(i)\sum_{k=1}^{K}\sigma_{k,t}^2(i)+N_0\right)\pmb I_e,\,\forall i\in[I]
\end{equation}
denotes the diagonal positive semi-definite covariance matrix of $\pmb q_t(i)$, which indicates that $\pmb\Sigma_t$ is a diagonal positive semi-definite matrix.

\par
According to \cite[Appendix. A]{Circular-symmetric-proof-2}, it is easy to verify that $\pmb q_t$ is a complex random vector with circular symmetry property, whose probability density function (pdf) is given by
\begin{equation}
p(\pmb q_t)=\frac{1}{\pi^{d}\text{det}(\pmb\Sigma_t)}\exp\left(-\pmb q_t^H\pmb\Sigma_t^{-1}\pmb q_t\right). \label{Pdf-of-Rkt-proof}
\end{equation}
Therefore, for the analysis of sensitivity, we further denote the effective noise based on dataset $\mathcal{D}_k$ as $\pmb z_{k,t}=[\pmb z_{k,t}^T(1),\ldots,\pmb z_{k,t}^T(I)]^T$ with
\begin{equation}
\pmb z_{k,t}(i)=\pmb r_{t}(i)-\sqrt{\eta_t(i)}D_k\pmb g_{k,t}(i;\mathcal{D}_k)-\nu_{k,t}(i), \nonumber
\end{equation} 
where $\nu_{k,t}(i)$ denotes the second constant term in (\ref{Rt}) and $\pmb g_{k,t}(i;\mathcal{D}_k)$ represents the updated gradient based on $\mathcal{D}_k$, thus we can obtain
\begin{equation}
\Pr[\pmb r_{t}\big|\pmb r_{t-1},\ldots,\pmb r_{1};\mathcal{D}_k]=\frac{1}{\pi^{d}\text{det}(\pmb\Sigma_t)}\exp\left(-\pmb z_{k,t}^H\pmb\Sigma_t^{-1}\pmb z_{k,t}\right). \nonumber
\end{equation}
Furthermore, according to (\ref{Ukt}), $\pmb z_{k,t}+\pmb u_{k,t}$ represents the effective noise based on dataset $\mathcal{D}_k'$. Hence, substituting them into (\ref{Privacy-loss-proof1}), we have
\begin{equation}
\begin{split}
&\mathcal{L}_{\mathcal{D}_k,\mathcal{D}_k'}\left(\pmb r\right) \\
&\quad=\sum_{t=1}^{T}\left[\left(\pmb z_{k,t}+\pmb u_{k,t}\right)^H\pmb\Sigma_t^{-1}\left(\pmb z_{k,t}+\pmb u_{k,t}\right)-\left(\pmb z_{k,t}\right)^H\pmb\Sigma_t^{-1}\pmb z_{k,t}\right] \\
&\quad=\sum_{t=1}^{T}\left[2\text{Re}\left\{\pmb u_{k,t}^H\pmb\Sigma_t^{-1}\pmb z_{k,t}\right\}+\pmb u_{k,t}^H\pmb\Sigma_t^{-1}\pmb u_{k,t}\right]. \nonumber
\end{split}
\end{equation}
Based on (\ref{Privacy-standard}), given the DP parameter pair $(\epsilon,\delta)$, the privacy violation probability bound can be expressed as
\begin{equation}
\begin{split}
\Pr&\left(\left|\mathcal{L}_{\mathcal{D}_k,\mathcal{D}_k'}(\pmb r)\right|>\epsilon\right) \\
&\overset{\text{(a)}}{\leq}\Pr\left[\left|\text{Re}\left\{\pmb u_{k,t}^H\pmb\Sigma_t^{-1}\pmb z_{k,t}\right\}\right|>\frac{\epsilon}{2}-\frac{1}{2}\sum_{t=1}^{T}\pmb u_{k,t}^H\pmb\Sigma_t^{-1}\pmb u_{k,t}\right] \\
&=2\Pr\left[\sum_{t=1}^{T}\text{Re}\left\{\pmb u_{k,t}^H\pmb\Sigma_t^{-1}\pmb z_{k,t}\right\}>\frac{\epsilon}{2}-\frac{1}{2}\sum_{t=1}^{T}\pmb u_{k,t}^H\pmb\Sigma_t^{-1}\pmb u_{k,t}\right] \\
&\triangleq 2\Pr\left(\Upsilon_k>c_k\right),\nonumber
\end{split}
\end{equation}
where (a) comes from the inequality $\Pr(|X+e|>\epsilon)\leq\Pr(|X|>\epsilon-e)$ for $e\geq 0$. We note that $\pmb z_{k,t}\sim\mathcal{CN}(\pmb 0,\pmb\Sigma_t)$ and $\pmb\Sigma_t$ is a diagonal matrix. Hence, it is easy to verify that $\pmb\Sigma_t^{-1/2}\pmb z_{k,t}\sim\mathcal{CN}(\pmb 0,\pmb I_{d})$, which leads to $\pmb u_{k,t}^H\pmb\Sigma_t^{-1}\pmb z_{k,t}\sim\mathcal{CN}(\pmb 0,\pmb u_{k,t}^H\pmb\Sigma_t^{-1}\pmb u_{k,t})$. As a result, considering the fact that any $\pmb z_{k,t}$ is statistical independent of each other, $\Upsilon_k$ is a random variable according to $\mathcal{N}(0,\frac{1}{2}\sum_{t=1}^{T}\pmb u_{k,t}^H\pmb\Sigma_t^{-1}\pmb u_{k,t})$.

\par
According to (\ref{Delta-bound}) and by denoting
\begin{equation}
\xi_t\leq\eta_t(i)\sum_{k=1}^{K}\sigma_{k,t}^2(i)+N_0,\, \forall i,t \nonumber
\end{equation}
as a lower bound of the power of effective noise, we have the following inequalities 
\begin{equation}
\begin{split}
v_k^2&\triangleq\sum_{t=1}^{T}\frac{2\gamma_t^2}{\xi_t}\underset{i\in[I]}{\max}\left\{\eta_t(i)\right\}\overset{(a)}{\geq}\sum_{t=1}^{T}\frac{||\pmb u_{k,t}||^2}{2\xi_t}\overset{(b)}{\geq}\frac{1}{2}\sum_{t=1}^{T}\pmb u_{k,t}^H\pmb\Sigma_t^{-1}\pmb u_{k,t}, \nonumber \\
&\;\hat{c}_k\leq\frac{\epsilon}{2}-\sum_{t=1}^{T}\frac{2\gamma_t^2}{\xi_t}\underset{i\in[I]}{\max}\left\{\eta_t(i)\right\}\overset{(c)}{\leq}\frac{\epsilon}{2}-\sum_{t=1}^{T}\frac{||\pmb u_{k,t}||^2}{2\xi_t}\leq c_k,
\end{split}
\end{equation}
where (a) and (c) comes from the upper bound in (\ref{Delta-bound}), and (b) is obtained by (\ref{Covariance-matrix}). Based on this, we introduce another random variable $\Xi_k$ according to $\mathcal{CN}(0,v_k^2)$. Thus, we arrive at the following inequalities:
\begin{equation}
\Pr(\Upsilon_k>c_k)\leq\Pr(\Xi_k>c_k)\leq\Pr(\Xi_k>\hat{c}_k).
\end{equation}

\par
Finally, according to Mill's inequality, we can obtain
\begin{equation}
\Pr\left(|\Upsilon_k|>c_k\right)<\sqrt{\frac{2}{\pi}}\frac{v_k}{\hat{c}_k}\exp\left(-\frac{\hat{c}_k^2}{2v_k^2}\right)<\delta, \nonumber
\end{equation}
followed by solving an equation based on the monotonicity property of function $\mathcal{C}(x)=\sqrt{\pi}xe^{x^2}$, then we arrive at the expected result in (\ref{Privacy-constraints}).\qed

\section{Proof of Theorem 3} \label{Proof-th3}
To start, by using the change of variables, we define the following new relationships:
\begin{equation}
a_{k,t}^i=\sigma_{k,t}^2(i),\, b_t^i=\eta_t^{-1}(i),\, c_t^i=\xi_t b_t^i=\xi_t\eta_t^{-1}(i). \label{Variable-change1}
\end{equation}
Thus, it is easy to verify that $a_{k,t}^i\geq 0$ and $b_t^i,c_t^i>0$. Besides, we also denote
\begin{equation}
d_t=\frac{\xi_t}{\max_i\{\eta_t(i)\}}=\xi_t\min_i\{b_t^i\},\,\forall t, \label{Variable-change2}
\end{equation}
which yields an inherent constraint $d_t\leq c_t^i,\forall i,t$.

\par
For short, we omit the superscript and subscript when describing the optimization variables, e.g., $a_{k,t}^i$ as $a$. Therefore, problem $\mathscr{P}_{1}$ can be transformed into problem $\mathscr{P}_{1.1}$:
\begin{subequations}
        \begin{align}
        & \;\,
        \underset{\{a,b,c,d\}}{\text{min}}
        & &
        \sum_{t=1}^{T}\left(1-\frac{\mu}{L}\right)^{-t}\left(\sum_{k=1}^{K}\sum_{i=1}^{I}a_{k,t}^i+N_0\sum_{i=1}^{I}b_{t}^i\right) \label{P1.1-obj-proof} \\
        & \text{subject to}
        & &
        \sum_{t=1}^{T}\frac{4\gamma_t^2}{d_t}-\mathcal{R}_{dp}\leq 0,  \label{P1.1-privacy-proof} \\
        &&&
        c_t^i-\sum_{k=1}^{K}a_{k,t}^i-N_0 b_t^i\leq 0,\,\forall i,t, \label{P1.1-cti-proof} \\
        &&&
        D^2\zeta_{k,t}^2(i)+ea_{k,t}^i\leq P_0|h_{k,t}(i)|^2b_t^i,\,\forall i,k,t, \label{P1.1-power-proof} \\
        &&&
        d_t-c_t^i\leq 0,\,\forall i,t \label{P1.1-accessional-proof}, \\
        &&&
        a_{k,t}^i\geq 0,\,b_t^i\geq 0,\,c_t^i\geq 0,\,d_t\geq 0,\,\forall i,k,t, \label{P1.1-inherent-proof}
        \end{align}
\end{subequations}
which is a pure convex problem and can be tackled by KKT conditions. Above all, the Lagrange function is given as
\begin{align*}
\mathscr{L}&=\sum_{t=1}^{T}\left(1-\frac{\mu}{L}\right)^{-t}\left(\sum_{k=1}^{K}\sum_{i=1}^{I}a_{k,t}^i+N_0\sum_{i=1}^{I}b_{t}^i\right) \\
&\quad+\beta\,\left(\sum_{t=1}^{T}\frac{4\gamma_t^2}{d_t}-\mathcal{R}_{dp}\right)-\sum_{t=1}^{T}\sum_{k=1}^{K}\sum_{i=1}^{I}\varphi_{k,t}^i a_{k,t}^i \\
&\quad+\sum_{t=1}^{T} \sum_{i=1}^{I}\iota_{t}^i\left(c_t^i-\sum_{k=1}^{K}a_{k,t}^i-N_0b_t^i\right)+\sum_{t=1}^{T}\sum_{i=1}^{I}u_t^i\left(d_t-c_t^i\right) \\
&\quad+\sum_{t=1}^{T}\sum_{k=1}^{K}\sum_{i=1}^{I}\psi_{k,t}^i\left(D^2\zeta_{k,t}^2(i)+ea_{k,t}^i-P_0|h_{k,t}(i)|^2b_t^i\right), \nonumber
\end{align*} 
where $\beta,\iota_t^i,u_k^i,\psi_{k,t}^i,\varphi_{k,t}^i\geq 0$ represent the Lagrange multipliers and for simplicity, we omit some obviously zero terms. We denote variables with "$\sim$" as the optimal solutions satisfying KKT conditions, e.g., $\widetilde{a}_{k,t}^i$ denotes the optimal $a_{k,t}^i$. Hence, we can obtain the following relationships:
\begin{subequations}
	\begin{align}
        &\frac{\partial\mathscr{L}}{\partial a_{k,t}^i}=\left(1-\frac{\mu}{L}\right)^{-t}-\iota_{t}^i+e\psi_{k,t}^i-\varphi_{k,t}^i=0, \label{KKT-akt} \\
        &\frac{\partial\mathscr{L}}{\partial b_{t}^i}=N_0\left(1-\frac{\mu}{L}\right)^{-t}-N_0\iota_{t}^i-P_0\sum_{k=1}^{K}\psi_{k,t}^i|h_{k,t}(i)|^2=0, \label{KKT-bti} \\
        &\frac{\partial\mathscr{L}}{\partial c_t^i}=\iota_t^i-u_t^i=0, \label{KKT-cti} \\
        &\frac{\partial\mathscr{L}}{\partial d_t}=-\beta\frac{4\gamma_t^2}{d_t^2}+\sum_{i=1}^{I}u_t^i=0, \label{KKT-dt}
	\end{align}
\end{subequations}
which indicate that at least one of $\widetilde{\iota}_t^i$ and $\widetilde{\varphi}_{k,t}^i$ is greater than $0$. Based on this, we mainly focus on the following three conditions and elaborate their optimality.

\par
\emph{\pmb{Power field}}: On the one hand, we first focus on a simple case where the inequality in (\ref{P1.1-privacy-proof}) strictly holds, i.e., the privacy constraint is not dominant. Hence, according to complementary slackness condition, we immediately get 
\begin{equation}
\widetilde{\beta}=\widetilde{u}_t^i=\widetilde{\iota}_t^i=0,\,\forall i,t.
\end{equation}
Furthermore, we have $\widetilde{\varphi}_{k,t}>0$ based on (\ref{KKT-akt}) which also implies $\widetilde{a}_{k,t}^i=0$. Hence, the power constraint (\ref{P1.1-power-proof}) becomes the unique one that restricts the learning accuracy. Based on this, we arrive at the unique optimal solution in this case, i.e.,
\begin{equation}
\widetilde{b}_t^i=\frac{1}{P_0}\underset{k\in[K]}{\max}\frac{D^2\zeta_{k,t}^2(i)}{\;|h_{k,t}(i)|^2}.
\end{equation}

\par
On the other hand, we turn to the difficult multi-solution case where the privacy constraint (\ref{P1.1-privacy-proof}) is stringent with $\widetilde{\beta}\neq 0$, which results in the following equality:
\begin{equation}
\sum_{t=1}^{T}\frac{4\gamma_t^2}{d_t}-\mathcal{R}_{dp}=0. \label{Privacy-equ-proof}
\end{equation}
Motivated by the previous condition, we set $\widetilde{a}_{k,t}^i=0$, which saves the communication and power resources of the edge server. Hence, we arrive at the following two conditions.

\par
\emph{\pmb{Privacy field}}: We first focus on the case where only the privacy constraint takes effect, which indicates (\ref{P1.1-power-proof}) strictly holds, i.e.,
\begin{equation}
\widetilde{b}_t>\frac{1}{P_0}\underset{i,k}{\max} \frac{D^2\zeta_{k,t}^2(i)}{\;|h_{k,t}(i)|^2}, \label{Bt-lower}
\end{equation}
yielding $\widetilde{\psi}_{k,t}^i\!=\!0$. Hence, based on (\ref{KKT-bti}) and (\ref{KKT-cti}), we obtain
\begin{equation}
\widetilde{u}_t^i=\widetilde{\iota}_t^i=(1-\mu/L)^{-t},\,\forall i,t. \label{Proof-uti}
\end{equation}
Besides, combining (\ref{KKT-dt}) and complementary slackness condition, we further get
\begin{equation}
\widetilde{d}_t=\widetilde{c}_t^i=N_0\widetilde{b}_t^i=\frac{2\gamma_t}{\sqrt{I}}(\widetilde{\beta})^{\frac{1}{2}}\left(1-\frac{\mu}{L}\right)^{\frac{t}{2}}, \label{KKT-sol1}
\end{equation}
which indicates $\widetilde{b}_t^i$ remains constant during one learning round, and for short, we denote $\widetilde{b}_t=\widetilde{b}_t^i,\forall i$. In summary, the optimal conditions of $\widetilde{b}_t$ are given by (\ref{Privacy-equ-proof}), (\ref{Bt-lower}), and (\ref{KKT-sol1}).

\par
\emph{\pmb{Privacy-power field}}: Now, we turn to analyze the case where the dual constraints of privacy and power take effect, i.e., there exist some $\widetilde{c}_t^i\neq\widetilde{d}_t$ which lead to $\widetilde{u}_t^i=\widetilde{\iota}_t^i=0$. Besides, from (\ref{KKT-bti}), we can verify that there exist $\widetilde{\psi}_{k,t}^i\neq 0$ based on the power constraints of all edge devices. Remarkably, $\widetilde{b}_t^i$ remains constant across all edge devices in the $i$-th communication block of learning round $t$, so it must meet all power constraints and strictly satisfy a certain one, i.e.,
\begin{equation}
\widetilde{b}_t^i=\frac{1}{P_0}\underset{k}{\max}\frac{D^2\zeta_{k,t}^2(i)}{\;|h_{k,t}(i)|^2}.
\end{equation}

\par
As for the other $\widetilde{u}_t^i\neq 0$, we consider it as a generalized version of (\ref{Proof-uti}), i.e., some communication blocks are restricted by privacy while the others are power. For analytical ease, we introduce parameter $\tau_t$ to represent the number of communication blocks limited by privacy in the $t$-th learning round, i.e., $\widetilde{u}_{k,t}^i\!\neq\! 0$. Hence, (\ref{KKT-dt}) can be reformulated as
\begin{equation}
\tau_t\left(1-\frac{\mu}{L}\right)^{-t}=\frac{4\gamma_t^2}{\widetilde{d}_t^2}\widetilde{\beta}.
\end{equation}
For instance, if all communication blocks are restricted by privacy, $\tau_t$ is set to $I$ which is the same as (\ref{KKT-sol1}), i.e., \emph{\pmb{Privacy}} \emph{\pmb{field}}. Therefore, considering the above two cases, we arrive at the expected result by elaborately selecting $\tau_t$ in each learning round $t$ and parameter $\widetilde{\beta}$ to strictly meet (\ref{Privacy-equ-proof}).

\par
Finally, according to (\ref{Variable-change1}), we obtain the expected result. \qed

\bibliographystyle{ieeetr}
\bibliography{refs}

\end{document}